\newtheorem{theorem}{Theorem}
\newtheorem{lemma}[theorem]{Lemma}
\newtheorem{proposition}[theorem]{Proposition}
\newtheorem{assumption}{Assumption}
\newtheorem{corollary}[theorem]{Corollary}
\newtheorem{remark}[theorem]{Remark}
\newtheorem{definition}[theorem]{Definition}
\newtheorem{example}[theorem]{\protect\examplename}
\providecommand{\examplename}{Example}
\newcommand{\R}{\mathbb R}
\newcommand{\Rd}{\mathbb R^d}
\newcommand{\Id}{\mathrm{Id}}
\newcommand{\ELL}{\mathrm{L}^2}
\newcommand{\calE}{\mathcal E}
\newcommand{\sfG}{\mathsf G}
\newcommand{\osc}{\mathrm{osc}}
\newcommand{\sfT}{\mathsf T}
\newcommand{\Var}{\mathrm{Var}}
\newcommand{\ess}{\mathrm{ess}}
\title{Weak Poincar{\'e} inequality comparisons for Ideal and Hybrid Slice Sampling}
\author{Sam Power \and Daniel Rudolf \and Bj\"orn Sprungk \and Andi Q. Wang}
\date{August 22, 2025}
\begin{document}

\maketitle

\begin{abstract}
    Using the framework of weak Poincar{\'e} inequalities, we provide a general comparison between Hybrid and Ideal Slice Sampling 
    in terms of their corresponding Dirichlet forms.
    In particular, under suitable assumptions Hybrid Slice Sampling  inherits fast convergence from Ideal Slice Sampling and conversely. 
    We apply our results to analyse the convergence of the Independent Metropolis-Hastings, stepping-out and shrinkage, as well as Hit-and-Run within slice sampling algorithms.
\end{abstract}

\section{Introduction}

Markov chain Monte Carlo (MCMC) methods form the backbone for computational Bayesian inference as a tool for sampling from a (posterior) probability distribution $\pi$ on a measurable state space $(\mathsf{G}, \mathcal{G})$.  Given a suitable $\sigma$-finite dominating reference measure $\nu$ on $(\mathsf{G}, \mathcal{G})$, we represent the distribution $\pi$ as  
\begin{equation}
    \pi(\dif  x) = c^{-1} \,\varpi(x) \,\nu({\rm d}x),
    \qquad
    c = \int_{\mathsf G} \varpi(x)  \nu({\rm d}x),
    \label{eq: repres_pi}
\end{equation}
where the not-necessarily-normalized density function $\varpi \colon \mathsf{G} \to (0, \infty)$ satisfies $\varpi \propto \frac{\dif  \pi}{\dif  \nu}$ with $c\in(0,\infty)$.  
Methodological milestones in the development of MCMC include the Metropolis--Hastings (MH) algorithm as well as Gibbs and Slice Sampling. In the MH algorithm, a proposal is drawn and then with a $\pi$-dependent acceptance probability, either accepted or rejected.  In order to achieve efficient state space exploration of the corresponding Markov chain, step-sizes must be tuned to stabilize the acceptance probabilities away from $0$. More broadly, such approaches require careful design and tuning of proposal kernels, which can significantly influence the overall performance. By contrast, ideal Gibbs and Slice Sampling approaches offer the promise of being `tuning-free' in principle. However, in many cases, implementation of these ideal algorithms can be challenging.
In this work, we focus on the \textit{Slice Sampling} paradigm, popularised in \cite{Neal2003Slice}, with ideas previously developed in papers such as \cite{Edwards1988,Mira2002}, and associated `hybrid' approaches. This hybrid methodology, coined in \cite{Roberts1997geometric,Latuszynski2014convergence}, allows users to balance the goals of efficiency, the degree of necessary tuning and ease of implementation. 

We begin by presenting slice sampling in its ideal form, and then highlight practical advantages regarding the implementation of hybrid schemes. For simplicity, consider the case of $\mathsf{G} = \mathbb{R}^d$ with $\nu$ being the Lebesgue measure on $\mathbb R^d$. In this case, Ideal Slice Sampling involves alternatively drawing an auxiliary `slice' variable $t$, followed by realizing a sample from the normalized restriction of $\nu$ to a corresponding superlevel set (or \textit{slice}),
\begin{equation*}
    \sfG(t) := \left\{ x \in \mathsf G: \varpi(x)> t \right\}.
\end{equation*}
For illustrative purposes a single transition\footnote{In the algorithmic transition mechanisms and elsewhere all appearing random variables are assumed to be defined on a sufficiently rich common probability space. Moreover, whenever it is well-defined, for a set $M$ denote by $\mathrm{Unif}(M)$ the corresponding uniform distribution.} of Ideal Slice Sampling 
from $x\in \sfG$ to $y\in \sfG$ is presented in Algorithm~\ref{alg:SSS}. 
\begin{algorithm}
    \caption{Ideal (uniform) Slice Sampling}
    \begin{algorithmic}[1]
        \State Given target $\pi$ on $\sfG=\R^d$, current location $x\in \sfG$; 
        \State \textbf{Sample} 
        $T\sim \mathrm{Unif}((0,\varpi(x)))$, call the result $t$;
        \State \textbf{Sample} 
        $Y\sim \mathrm{Unif}(\sfG(t))$, call the result $y$;      
        \State \textbf{Return} $y\in \sfG$.
    \end{algorithmic}
    \label{alg:SSS}
\end{algorithm}

\noindent
It is well-known that the corresponding Markov kernel is $\pi$-reversible and therefore has the desirable property of $\pi$-invariance. Provided that such an ideal sampler can be implemented, a number of empirical and theoretical results demonstrate the ability of the slice sampler to explore complex state spaces efficiently; see recent quantitative results on convergence rates in \cite{Natarovskii2021quantitative,Schar2023revisited,RuSch24} for the state of the art. Moreover, let us emphasize that Algorithm~\ref{alg:SSS} is tuning-free: no user-chosen parameters appear that influence the performance. However, practical implementation of Algorithm~\ref{alg:SSS} can be extremely challenging, as the task of sampling from the superlevel sets $\mathsf{G}(t)$ is in general intractable.

A reasonable workaround for a given $\mathsf{G}(t)$ is to use an inner transition mechanism that \textit{approximates} the challenging task of sampling from $\mathsf{G}(t)$. This is commonly referred to as \textit{Hybrid Slice Sampling} \cite{Roberts1997geometric, Latuszynski2014convergence}: instead of sampling exactly from $\nu$ conditioned on the slice $\sfG(t)$, a hybrid method performs a single transition of a Markov kernel $H_t$, which is assumed to be invariant with respect to (w.r.t.) said distribution. 
We refer to the collection of $H_t$ as \emph{on-slice kernels} and note that the feasible design of those is generally nontrivial, but a number of sensible default choices exist \cite{Neal2003Slice, Murray2010elliptical, ScharEtAl2023Gibbsian}. Moreover, under very mild conditions, this approach also gives rise to a $\pi$-invariant Markov chain. Algorithm~\ref{alg:HSS} offers a description of the  corresponding transition mechanism from $x\in \sfG$ to $y\in \sfG$ based on the family of Markov kernels $(H_t)_{t\in(0,\Vert \varpi \Vert_\infty)}$.
\begin{algorithm}
    \caption{Hybrid Slice Sampling}
    \begin{algorithmic}[1]
        \State Given target $\pi$, family of Markov kernels 
        $(H_t)_{t\in(0,\Vert \varpi \Vert_\infty)}$,
        with each $H_t$ being invariant w.r.t. $\mathrm{Unif}(\sfG(t))$, and current location $x$;
        \State \textbf{Sample} $T \sim \mathrm{Unif} \left( (0, \varpi(x) ) \right)$, call the result $t$;
        \State \textbf{Sample} $Y \sim H_t \left( x, \cdot \right)$, call the result $y$;
        \State \textbf{Return} $y\in \sfG$.
    \end{algorithmic}\label{alg:HSS}
\end{algorithm}
We investigate how fast the distribution of the $n$th iterate of Hybrid Slice Sampling converges, \textit{relative} to 
their idealised versions. 
Building on Markov chain comparison techniques developed recently in \cite{Andrieu2021comparison}, we show that (in a sense to be made precise) Hybrid Slice Samplers are at best as good as their ideal counterparts, and furthermore, that under suitable assumptions this statement can be complemented by verifying that they are `almost' as good as their ideal versions. The proper statement of the latter descriptions is provided in Theorem~\ref{thm:SS_comp}.
Combined with results on the convergence of Ideal Slice Samplers, such as \cite{Natarovskii2021quantitative} and the recent developments in \cite{Schar2023revisited,RuSch24}, one can then deduce bounds on the  convergence rate of Hybrid Slice Samplers in various settings.

\emph{Outline:} The comparison framework based on weak Poincar\'e inequalities is introduced in Section~\ref{subsec:general_comp}.  It is then applied to Hybrid Slice Sampling in Section \ref{subsec:comp_SS}.  In Section~\ref{sec:examples} we show how the novel abstract comparison result can be employed to study a number of classical Metropolis algorithms (upon observing that they are, in fact, instances of Hybrid Slice Sampling) as well as stepping-out and shrinkage on bimodal targets.
Finally, we provide a detailed analysis of \textit{Hit-and-Run within Slice Sampling}.

\subsection{Related work}
Convergence of Ideal Slice Sampling algorithms has been studied in the recent works of \cite{Natarovskii2021quantitative, Schar2023revisited,RuSch24}. Moreover, comparisons of slice sampling with hybrid variants and other algorithms have been initiated and investigated in \cite{Latuszynski2014convergence, Rudolf2018comparison} as well as in the recent work of \cite{Qin2023}. The previous results on convergence of Hybrid Slice Samplers, \cite[Theorem~8]{Latuszynski2014convergence} and the improvement \cite[Proposition~28]{Qin2023}, give sufficient conditions for geometric convergence.

In this work we give a substantial extension in Theorem~\ref{thm:SS_comp}. We utilize the weak Poincar\'e inequality comparison framework introduced in \cite{Andrieu2021comparison}, where it was used to study pseudo-marginal MCMC methods, and further discussed in \cite{Andrieu2022poincare,Andrieu2023wpi}. 
We remark that while we utilize the same high-level comparison technique as \cite{Andrieu2021comparison}, the nature of the comparison is notably different: in the case of pseudo-marginal MCMC, it was the \textit{acceptance probability} that was perturbed in a `local' manner, whereas in the (Ideal vs. Hybrid) Slice Sampling setting, we are perturbing the 
\textit{on-slice transition mechanism}, thus demonstrating that the comparison technique is able to flexibly handle a range of perturbations.

As already mentioned, our Theorem~\ref{thm:SS_comp} significantly extends previous results of \cite{Latuszynski2014convergence,Qin2023}, both by weakening the conditions required to obtain conclusions, and enabling the quantitative treatment of such Hybrid Slice Sampling with no spectral gap and slower-than-exponential convergence behaviour, i.e. in the \textit{subgeometric} scenario. The geometric case is also recovered as a special case; see Corollary~\ref{cor:Ht_gap_comp}.

\section{Preliminaries}\label{subsec:general_comp}

We provide some general notation and preliminaries. 
Let $\left( \sfG, \mathcal{G}, \mu \right)$
be a generic probability space
and for $f \colon \mathsf{G} \to \R$ let $\mu(f) := \int_\sfG f(x) \mu (\dif x)$ be the mean of $f$ w.r.t. $\mu$ whenever it exists. We let $\ELL(\mu)$ denote the space of square-integrable functions that is determined by the set of functions $f\colon \sfG\to\R$ that satisfy $\| f \|^2_\mu := \mu(f^2)<\infty$. Moreover, the corresponding inner product is given by $\langle f, g \rangle_\mu := \mu(f \cdot g)$ for $f, g \in \ELL(\mu)$. 
For $f\in \ELL(\mu)$ we denote the variance of $f$ w.r.t. $\mu$ as
\begin{equation*}
    \Var_\mu(f) := \mu(f^2) -\mu(f)^2.
\end{equation*}
Obviously, if $f\in \ELL_0(\mu) := \{ f \in \ELL(\mu): \mu(f) = 0 \}$, then $\Var_\mu(f) = \|f\|^2_\mu$. For $f\colon \sfG\to\R$ we also define the oscillation seminorm as
\begin{equation*}
       \|f\|_{\mathrm{osc},\mu}:= \ess_\mu \sup f - \ess_\mu \inf f.
\end{equation*}
Let $P\colon \sfG\times \mathcal{G} \to [0,1]$ be a generic Markov kernel that is \textit{$\mu$-reversible}, i.e. $\mu \otimes P (A \times B) = \mu \otimes P(B \times A)$ for all $A, B \in \mathcal{G}$, where $\mu \otimes P \left( \dif  x, \dif  y \right) =  P \left( x, \dif  y \right) \mu \left( \dif  x \right)$.
The $\mu$-reversibility of $P$ implies $\mu$-invariance, that is, $\mu P(A) := \int_G P(x,A)\,\mu(\dif x)$ coincides with $\mu(A)$ for any $A\in\mathcal{G}$.
Note that the Markov kernel $P$ induces a self-adjoint linear operator $\mathrm{P} \colon \ELL(\mu)\to  \ELL(\mu)$ by $f\mapsto \mathrm{P}f$, where $\mathrm{P}f(x):= \int_{\sfG} f(y) P(x,\dif y)$ for $x\in\sfG$. Moreover, $P$ is said to be \textit{positive } if $\langle \mathrm{P}f, f\rangle_\mu \geq 0$ for all $f\in \ELL(\mu)$.
In this context the \textit{Dirichlet form} of $P$ is given by 
\begin{equation*}
    \mathcal E_\mu(P,f) := \langle (\Id - \mathrm{P}) f, f \rangle_\mu, \quad f\in \ELL(\mu),
\end{equation*}
where $\Id$ denotes the identity operator on $\ELL(\mu)$. 
Exploiting the $\mu$-invariance (implied by the $\mu$-reversibility) yields the representation
\begin{equation}
    \calE_\mu (P,f) = \frac{1}{2} \int_{\sfG} \int_{\sfG}   \left[ f(x) - f(y) \right]^2  P \left( x, \dif y \right)  \mu (\dif x).
    \label{eq:dirichlet_repn}
\end{equation}
Given two $\mu$-reversible Markov kernels kernels $P_1, P_2$, we say that $P_1$ \textit{dominates} $P_2$ if the Dirichlet forms satisfy for all $f\in \ELL_0(\mu)$ that
$
    \calE_\mu (P_2,f) \leq \calE_\mu (P_1,f).
$%
 We now detail concepts concerning weak Poincar\'e inequalities.

\subsection{Weak Poincar\'e inequalities and consequences}

We follow the approach to weak Poincar\'e inequalities developed in \cite{Andrieu2021comparison, Andrieu2022poincare} in the context of (discrete-time) Markov chains  imposing additional regularity properties (such as having a reversible and positive Markov kernel), which will be sufficient for our purposes. 

Let $\left( \sfG, \mathcal{G}, \mu \right)$ be a probability space. We provide the definition 
and consequences of the property of a weak Poincar\'e inequality.

\begin{definition}\label{def:WPI}
    For a function $\beta \colon (0,\infty) \to [0,\infty)$ we say that a Markov kernel $P\colon \sfG\times \mathcal{G} \to [0,1]$ satisfies a weak Poincar{\'e} inequality (WPI) w.r.t. $\beta$ 
    if $\beta$ is decreasing with $\lim_{s\to\infty} \beta(s)= 0$ and
    $P$ is $\mu$-reversible, positive and  
    for all $s>0$, $f\in \ELL_0(\mu)$ we have
        \begin{equation}
         \|f\|^2_\mu \leq s\cdot \calE_\mu (P,f) + \beta(s) \cdot \|f\|_{\osc,\mu}^2.
         \label{eq:WPI}
    \end{equation}
    If there exists a function $\beta$, such that $P$ satisfies a WPI w.r.t. $\beta$ we also just say $P$ satisfies a WPI.
\end{definition}

\begin{remark}
    If a $\mu$-invariant Markov kernel $P$ is $\mu$-irreducible, then there exists a function $\beta$ satisfying the properties of Definition~\ref{def:WPI}, see \cite[Section~4.2.3]{Andrieu2022poincare}. 
    Moreover, given a Markov kernel that satisfies a WPI with $\beta$, by Popoviciu's inequality $\|f\|^2_\mu \le\|f\|_{\osc,\mu}^2/4$ for any $f\in \ELL_0(\mu)$, it also satisfies a WPI w.r.t. $s\mapsto\min\{1/4,\beta(s)\}$. Note that this new function is uniformly bounded by $1/4$.
    \label{rmk:beta}
\end{remark}
To highlight the utility of establishing a WPI for a Markov kernel, we reformulate a result that is stated in \cite[Theorem~7, Example~2]{Andrieu2022poincare}, see also \cite[Theorem~21, Remark 2]{Andrieu2021comparison}.
\begin{theorem} \label{thm:conv}
Assume that a Markov kernel $P\colon \sfG\times\mathcal{G}\to [0,1]$ satisfies a WPI w.r.t. $\beta$. Then, there exists a strictly decreasing function $\alpha_\beta \colon (0,\infty) \to (0,1)$ with $\lim_{n\to\infty} \alpha_\beta(n) = 0$ such that for all $n\in\mathbb{N}$ and  $f\in \ELL_0(\mu)$ with $\Vert f \Vert_{\osc,\mu}<\infty$  we have
\begin{equation}    \label{eq: to_zero}
        \|\mathrm{P}^n f\|^2_\mu \leq \alpha_\beta(n) \cdot \|f\|_{\osc,\mu}^2.
\end{equation}
\end{theorem}
For the convenience of the reader
we provide a consequence that motivates the interest in WPIs in the context of the convergence of Markov chains that readily follows by  virtue of \cite[Remark~2, Supplement]{Andrieu2021comparison}.
\begin{corollary}
\label{cor: MC_implication}
    Let $(\Omega,\mathcal{A},\mathbb{P})$ be a probability space and $(X_n)_{n\in\mathbb{N}_0}$ be a Markov chain with Markov kernel $P\colon \sfG\times \mathcal{G}\to [0,1]$ and initial distribution $\mu_0$, that is, for all $i\in\mathbb{N}_0$ the mappings $X_i\colon (\Omega,\mathcal{A}) \to (\sfG,\mathcal{G})$ are measurable, satisfy
    \begin{align*}
        \mathbb{P}(X_{i+1}\in A\mid X_1,\dots,X_{i}) 
        = \mathbb{P}(X_{i+1}\in A\mid X_{i}) = P(X_{i},A)
    \end{align*}
    for all $A\in\mathcal{G}$ almost surely and $\mathbb{P}^{X_0}=\mu_0$, where $\mathbb{P}^{X_j}$ denotes the distribution of $X_j$ for any $j\in\mathbb{N}_0$. Moreover, assume that for $P$ a WPI w.r.t. $\beta$ holds, as well as, that $\mu_0$ admits a density $\frac{ \dif \mu_0}{\dif \mu}$ w.r.t. $\mu$. Then, there is a strictly decreasing function $\alpha_\beta \colon (0,\infty) \to (0,1)$ with $\lim_{n\to\infty} \alpha_\beta(n) = 0$ such that
    \[
    \sup_{A\in\mathcal{G}} \left\vert \mathbb{P}^{X_n}(A) - \mu(A) \right\vert 
    \leq \left\Vert \frac{ \dif \mu_0}{\dif \mu}-1 \right\Vert_{\osc,\mu}^{1/2} \cdot \alpha_\beta(n)^{1/2}, \qquad n\in\mathbb{N}.
    \]
\end{corollary}
For further consequences regarding corresponding Markov chains we refer to \cite[Section~2.4]{Andrieu2022poincare}. 
Let us now comment on different aspects of $\alpha_\beta$.
\begin{remark}
\label{rem: alpha_n}
    In \cite{Andrieu2021comparison,Andrieu2022poincare}
    the function $\alpha_\beta$ is determined as the inverse of the strictly decreasing convex function $F_\beta\colon (0,1] \to (0,\infty)$ given by $F_\beta(x):=\int_x^1 (K_\beta^*(v))^{-1} {\dif v}$. Here $K_\beta^*\colon [0,\infty)\to [0,\infty]$ is defined as $K_\beta^*(v)=\sup_{u\geq0}[uv-K_
    \beta(u)]$, the convex conjugate of $K_\beta\colon[0,\infty)\to [0,\infty)$, which itself is given by
\begin{align*}
    K_\beta(u):=  \begin{cases}
            u \cdot \beta(1/u) & u>0,\\
            0   & u=0.
        \end{cases}
\end{align*}
The above abstract construction is somewhat opaque; key examples are given in \cite[Section~2.1.2]{Andrieu2021comparison}: for constants $c_0,c_1,c_2,c_3 \in (0,\infty)$ there exist further constants $c_4,c_5\in (0,\infty)$ such that
   \begin{itemize}
        \item $\beta(s)=c_0 s^{-c_1}$ yields $\alpha_\beta(n) \leq c_0(1+c_1)^{1+c_1} n^{-c_1}$,
        \item $\beta(s) = c_1 \exp \left( - c_2 s^{c_3} \right)$ yields $\alpha_\beta(n) \leq c_4 \exp\big( - c_5 \cdot n^\frac{c_3}{1 + c_3} \big)$, and
        \item $\beta(s) = c_1 / \log (\max \left\{ c_2, s \right\}) ^{c_3}$ yields $\alpha_\beta(n) \leq c_4 / \log (\max \left\{ 2, n \right\}) ^{c_3}$.
    \end{itemize}  
\end{remark}

We finish this section by emphasizing the relation of a WPI to a (strong) Poincar\'e inequality that yields a lower bound of the (absolute) $L^2(\mu)$-spectral gap of $P$.

\begin{remark}
\label{rem: SPI}
    Let $P$ be a Markov kernel that satisfies a WPI w.r.t. $\beta$. Assume that for $\beta$ there exists a $\gamma>0$, such that $\beta(s) = 1/4\cdot\mathbf{1}_{(0,\gamma^{-1})}(s)$.
    Then, \eqref{eq:WPI} is equivalent to $\gamma \,\|f\|^2_\mu \leq \mathcal E_\mu(P,f)$ for any $f\in \ELL_0(\mu)$, known as the (strong) Poincar\'e inequality. 
 The latter is also equivalent to 
 \begin{equation}
 \label{eq: low_spec_gap}
    \inf_{f\in \ELL_0(\mu),\, \Vert f \Vert_\mu=1} 
    \mathcal{E}_\mu(P,f) \geq \gamma.
 \end{equation}
Moreover, we have
\[
  1-\sup_{f\in \ELL_0(\mu),\, \Vert f \Vert_\mu=1} \langle \mathrm{P}f,f\rangle_{\mu}= \inf_{f\in \ELL_0(\mu),\, \Vert f \Vert_\mu=1} 
    \mathcal{E}_\mu(P,f),
\]
where
by the
$\mu$-reversibility (that is, self-adjointness of ${P}$) and positivity, 
the left-hand side coincides with the absolute $L^2(\mu)$-spectral gap of $P$. Consequently, \eqref{eq: low_spec_gap} implies that $\gamma$ is a lower bound of the latter.
\end{remark}

\section{Hybrid and Ideal Slice Sampling}\label{subsec:comp_SS}

For given $\sigma$-finite measure space $(\mathsf{G}, \mathcal{G}, \nu)$, assume that the distribution $\pi$ takes the form of \eqref{eq: repres_pi}, i.e. there is a not-necessarily-normalized density function $\varpi \colon \sfG \to (0,\infty)$ such that $\varpi \propto \frac{{\rm d}\pi}{{\rm d}\nu}$. 
For well-definedness purposes, we assume that the supremum norm $\Vert \varpi \Vert_\infty$ coincides with the essential supremum norm 
$\mathrm{ess}_\nu \sup \varpi$.
For $\sfT := \left (0,  \Vert \varpi \Vert_\infty \right)$ define the 
superlevel set (or slice) of $\varpi$ at $t\in\sfT$, as
\begin{align*}
    \sfG \left( t \right) := \left\{ x \in \mathsf{G}: \varpi \left( x \right) > t \right\}.
\end{align*}
This induces a function $m\colon \sfT \to (0,\infty)$ by $ m(t) := \nu \left( \sfG \left( t \right) \right)$ and a probability measure $\nu_t$ on $(\mathsf{G},\mathcal{G})$ by
\begin{equation}\label{eq:def_pi_t}
    \nu_{t} \left(A\right) := \frac{\nu\left(A\cap \sfG\left(t\right)\right)}{m \left( t \right)}, \quad A\in \mathcal{G},\; t\in \sfT.
\end{equation}
Within this framework and with the provided objects, we introduce the ideal and hybrid setting.

\parskip=1ex
\noindent
\textbf{Setting $\mathsf{Slice} \left( \pi, \nu \right)$:}
Ideal Slice Sampling in this context corresponds to 
the Markov kernel specified (by $\varpi$ and $\nu$) as
\begin{align}
    U \left( x, A \right) & =  \frac{1}{\varpi \left( x \right)} \int_{0}^{\varpi\left( x \right)} \nu_{t} \left(A\right) \dif  t, \quad x \in \mathsf{G}, A \in \mathcal{G}.
    \label{eq:U_def}
\end{align}
\begin{remark}
For the moment, let $\sfG = \R^d$ equipped with the Borel $\sigma$-algebra. Then, for $\nu$ being the Lebesgue measure, we recover `uniform' slice sampling which we considered in  Algorithm~\ref{alg:SSS}. For $\nu (\dif x) = \vert x \vert^{1-d} \, \dif x$, where $\vert \cdot \vert$ denotes the Euclidean norm in $\R^d$, we obtain polar- and more generally for $\nu (\dif x) = \vert x \vert^{k-d} \, \dif x$ with $k \in (0, \infty)$, we obtain $k$-polar slice sampling, cf. \citep{Roberts2002polar, RuSch24, Schar2023revisited}. Moreover, for $\nu$ being some Gaussian measure we have `Gaussian' slice sampling, which is closely related to elliptical slice sampling \citep{Murray2010elliptical}.
\end{remark}

\parskip=1ex
\noindent
\textbf{Setting $\mathsf{HybridSlice} \left( \pi, \nu, (H_t) \right)$:} For Hybrid Slice Sampling assume that for any $t \in \sfT$, we have a Markov kernel $H_t$ on $\sfG(t)$ which is reversible with respect to $\nu_t$.
Then, for given $\nu$, $\varpi$ and $(H_t)_{t \in \sfT}$ the corresponding Markov kernel in this setting is
\begin{equation}
    H \left( x, A \right) = \frac{1}{\varpi \left( x \right)}  \int_{0}^{\varpi \left( x \right)} H_{t} \left( x, A\cap \sfG(t)\right) \, \dif t, \quad x \in \sfG, A\in \mathcal G.
    \label{eq:H_def}
\end{equation}
Note that the setting $\mathsf{HybridSlice} \left( \pi, \nu, (\nu_t) \right)$ coincides with $\mathsf{Slice} \left( \pi, \nu \right)$, i.e. $U=H$. In this sense, we recover Ideal Slice Sampling as a special case of Hybrid Slice Sampling.

\subsection{Comparison results for slice sampling}

We start with the main regularity assumption for our comparison arguments.
\begin{assumption}\label{assms}
For a measurable function $\widetilde{\beta}: (0,\infty) \times \sfT \to [0, 1/4]$ within the setting $\mathsf{HybridSlice} \left( \pi, \nu, (H_t) \right)$ assume that for Lebes\-gue-almost every $t \in \sfT$ the Markov kernel $H_t$ satisfies
a WPI w.r.t. $\widetilde{\beta}(\cdot,t)$, i.e. 
for each $s > 0$ and $f \in \ELL_0 (\nu_t)$ we have  by Definition~\ref{def:WPI} that
    \begin{equation}
        \Vert f \Vert_{\nu_t}^2 \leq s \cdot \calE_{\nu_t} \left( H_t, f \right) + \widetilde{\beta} \left(s, t \right) \cdot \|f\|_{\osc,\nu_t}^2.
        \label{eq:Ht_WPI}
    \end{equation}
\end{assumption}
\begin{remark}
    Recall from Remark~\ref{rmk:beta} that a sufficient condition for the existence of a function $\widetilde{\beta}(t,\cdot)$ 
    to satisfy \eqref{eq:Ht_WPI} is $\nu_t$-irreducibility of $H_t$. In turn, assuming irreducibility for the on-slice Markov kernels is a fairly weak requirement for problems in which the sets $\sfG(t)$ are well-connected. 
\end{remark}
\begin{remark}
    Note that a function $f\in \ELL_0(\nu_t)$ is only supported on $\sfG(t)$, but for $f$ defined on $\sfG \supseteq \sfG(t)$, we can always consider the restriction $f|_{\sfG(t)}$ of $f$ to $\sfG(t)$.
\end{remark}
\begin{remark}
   
 Under Assumption~\ref{assms}, by Theorem~\ref{thm:conv}, the convergence $\| \mathrm{H}_t^n f\|^2_{\nu_t} \to 0$ as $n \to \infty$ for $f \in \ELL_0(\nu_t)$ with $\Vert f \Vert_{\osc,\nu_t}<\infty$ can be quantified where it may be slower than exponential.
   We emphasize that in Assumption~\ref{assms} in the WPI requirement it is hidden that $H_t$
 must be $\nu_t$-reversible and positive. 
 However, the positivity of $H_t$ is not too restrictive. Namely, for a $\nu_t$-reversible $H_t$ 
 the lazy modification $\bar{H}_t(x,A) = (H_t(x,A)+ {\bf 1}_{A}(x))/2$ for $x\in \sfG(t)$, $A\in \mathcal{G}$ with $A\subseteq \sfG(t)$ is $\nu_t$-reversible and positive.
\end{remark}

The following lemma adds desirable regularity requirements for slice sampling. For a proof of these facts we refer to \cite[Lemma~1, Lemma~5]{Latuszynski2014convergence} and \cite[Section~5]{Mira2002}.

\begin{lemma}
Within the setting of $\mathsf{Slice} \left( \pi, \nu \right)$ 
the Markov kernel $U$ is $\pi$-reversible and positive. In the setting of $\mathsf{HybridSlice} \left( \pi, \nu, (H_t) \right)$, assume that for any $t\in\sfT$ the Markov kernel $H_t$ is positive (on $\ELL(\nu_t)$). Then, the Markov kernel $H$ is $\pi$-reversible and positive. 
\end{lemma}

The latter lemma allows to compare the Dirichlet forms regarding the dominance of Markov kernels, see below \eqref{eq:dirichlet_repn}.

\begin{theorem}
    \label{thm:SS_comp}
Within the $\mathsf{HybridSlice} \left( \pi, \nu, (H_t) \right)$-setting with Markov kernel $H$, assume that for any $t\in\sfT$ the Markov kernel $H_t$ is positive. Then, the Markov kernel $U$ from setting $\mathsf{Slice} \left( \pi, \nu \right)$ dominates $H$, i.e. for all $f \in \ELL_0(\pi)$ holds
    \begin{equation}
        \calE_\pi (H,f) \leq \calE_{\pi} (U,f).
        \label{eq:H_worseU}
    \end{equation}
    More restrictively, let Assumption~\ref{assms} be satisfied w.r.t. $\widetilde{\beta}$. Then, for all $s>0$, $f \in \ELL_0(\pi)$ we have
    \begin{equation}
        \calE_\pi (U,f) \leq s \cdot \calE_\pi (H,f) + \beta(s) \cdot \|f\|_{\osc,\pi}^2,
        \label{eq:U_WPI_H}
    \end{equation}
    where $\beta : (0,\infty)\to [0, \infty)$, given by
    $
        \beta(s) := c^{-1} \cdot \int_\sfT \widetilde{\beta}(s,t)\, m \left( t \right) \, \dif t,
    $
    is decreasing and satisfies $\lim_{s\to\infty} \beta(s)=0$. 
\end{theorem}
The theorem can be interpreted and read as follows: the Dirichlet form, as a measure of variance dissipation (where larger values correspond to faster convergence), entails with \eqref{eq:H_worseU} that a Markov chain with Markov kernel $H$ converges slower to $\pi$ than a Markov chain with Markov kernel $U$. Examining \eqref{eq:U_WPI_H}, we can then quantify the `slow down' which $H$ suffers, relative to $U$.

 We add consequences under additional assumptions that offer different perspectives around the result.

\begin{corollary}
\label{cor:Ht_gap_comp}
For a measurable function $\gamma\colon \sfT \to (0,\infty)$, let
Assumption~\ref{assms} be satisfied with
\[
\widetilde{\beta}(s,t) = 1/4\cdot \mathbf{1}_{(0,\gamma(t)^{-1})}(s),
\]
i.e. by Remark~\ref{rem: SPI} for any $t\in\sfT$ the Markov kernel $H_t$ satisfies a Poincar\'e inequality and the number $\gamma(t)$ is a lower bound of the corresponding absolute $\ELL(\nu_t)$-spectral gap.
Then, \eqref{eq:U_WPI_H} holds with
    \begin{equation*}
        \beta(s) = \left(4 \cdot c \right)^{-1} \cdot \int_\sfT \mathbf{1}_{(0,\gamma(t)^{-1})}(s) \cdot m \left( t \right) \dif t.
    \end{equation*}
    Moreover, if $\gamma_\star := \inf_{t\in\sfT} \gamma(t)>0$, then for all $f\in\ELL_0(\pi)$ we have
    \[
               \gamma_\star\cdot \calE_{\pi} (U,f) \leq \calE_\pi (H,f) \leq \calE_{\pi} (U,f).
    \]
\end{corollary}
\begin{proof}
    As an immediate consequence of Theorem~\ref{thm:conv} it follows that $\beta$ satisfies \eqref{eq:U_WPI_H}. In the case of $\gamma(t)\geq \gamma_\star>0$ for all $t\in\sfT$ we have
    \[
    \beta(s)\leq 1/4\cdot \mathbf{1}_{(0,\gamma_\star^{-1})}(s),
    \]
    where we used $c=\int_{\sfT} m(t) \dif t$, cf. the initial arguments in \cite[proof of Lemma~1]{Latuszynski2014convergence}. Now, for $s=\gamma_\star^{-1}$ in \eqref{eq:U_WPI_H} the lower and from \eqref{eq:H_worseU} the upper bound w.r.t.  $\calE_\pi (H,f)$ follows.
\end{proof}

Corollary~\ref{cor:Ht_gap_comp} can be interpreted as follows.
On the one hand, if the kernels $H_t$ admit a uniform lower bound of the $\ELL(\nu_t)$-spectral gaps, i.e. $\inf_{t \in \sfT} \gamma(t) = \gamma_\star>0$, we conclude by Remark~\ref{rem: SPI} that $H$ possesses an $\ELL(\pi)$-spectral gap `relative to' $U$. That is, one can genuinely view the hybrid sampler as being at least as efficient as the comparable ideal one, slowed down by a constant factor. On the other hand, if `sufficiently many' on-slice kernels $H_t$ have only slower than exponential convergence to their invariant measure -- in the sense that $\widetilde{\beta} \left( s, t \right) > 0$ for all $s>0$ to the extent that the function $\beta(s)$ is strictly positive for all $s>0$ -- then we can only deduce a slower-than-exponential rate of convergence for $H$ (relative to $U$). In fact, even if all on-slice kernels $H_t$ possess a spectral gap, but these gaps tend to $0$ in some limit, then our results only allow us to deduce that $H$ has slower than exponential convergence behaviour relative to $U$. 

\begin{remark}
    Theorem~\ref{thm:SS_comp} simplifies and extends \cite[Theorem~8]{Latuszynski2014convergence} and the recent improvement \cite[Proposition~28]{Qin2023}. In particular, i) the somewhat opaque $\beta_k$ condition of \cite[Theorem~8]{Latuszynski2014convergence} is removed, and ii) we are able to make nontrivial deductions about the subgeometric case, accommodating the scenario in which either the Ideal Slice Sampling kernel or the on-slice kernels $(H_t)_{t \in \sfT}$, or neither possess an absolute spectral gap.
\end{remark}

\begin{remark}
    A similar result was proven in \cite[Theorem 25]{Andrieu2018uniform} for two-component Gibbs samplers. Indeed, Ideal Slice Sampling admits an interpretation in terms of a two-component Gibbs sampler over the joint state $(x, t) \in \sfG\times\sfT$. This comparison result above can then also be seen as an extension of \cite[Theorem 25]{Andrieu2018uniform} to the subgeometric scenario.
\end{remark}

Note that \cite{Natarovskii2021quantitative,RuSch24} provide sufficient conditions in the $\mathsf{Slice}(\pi,\nu)$-setting which imply that $U$ possesses a strictly positive $\ELL(\pi)$-spectral gap, cf. Remark~\ref{rem: SPI}. We add a consequence of Theorem~\ref{thm:conv} when this is indeed the case.

\begin{corollary} \label{cor: spec_gap_U_implies_WPI_H}
    Under Assumption~\ref{assms} w.r.t. $\widetilde{\beta}$, and provided that $U$ satisfies for some constant $\gamma > 0$, for all $f\in \ELL_0(\pi)$ that
\begin{equation}
     \gamma \cdot \|f\|^2_\pi \leq \calE_\pi (U,f),
     \label{eq:SPI_SSS}
\end{equation}
     i.e $U$ admits a (strong) Poincar\'e inequality, a.k.a. possesses $\gamma$ as a lower bound of the $\ELL(\pi)$-spectral gap, cf. Remark~\ref{rem: SPI}, 
     we have 
     for all $s>0$ and $f\in \ELL_0(\pi)$, that
    \begin{equation}
    \label{eq:WPI_H}
        \|f\|^2_\pi \leq s\cdot \calE_\pi (H,f) + \bar \beta(s) \cdot \|f\|_{\osc,\pi}^2,
    \end{equation}
    with
    \begin{equation*}
        \bar{\beta}(s):=c^{-1} \gamma^{-1} \int_{\sfT} \widetilde{\beta}(\gamma s,t)\, m(t)\dif t
        = 
        \gamma^{-1} \, \beta(\gamma \cdot s),
    \end{equation*}
    and $\beta$ from  Theorem~\ref{thm:SS_comp}.
   In particular, $H$ satisfies a WPI w.r.t. $\bar \beta$.
\end{corollary}
\begin{proof}
Inequality \eqref{eq:WPI_H} follows by Theorem~\ref{thm:SS_comp}, \eqref{eq:SPI_SSS} and a rescaling argument. Then, by the inheritance of properties of $\beta$ to $\bar\beta$, it is implied that $H$ fulfills the WPI requirements.
\end{proof}
An immediate implication of the former corollary, by applying Theorem~\ref{thm:conv}, is the deduction of an overall (possibly slower-than-exponential) convergence bound for $\|\mathrm{H}^n f\|_2^2$ for any $f\in \ELL_0(\pi)$.
\subsection{Proof of Theorem~\ref{thm:SS_comp}}
We begin with two lemmas that provide useful representations of the Dirichlet forms of $H$ and $U$.

\begin{lemma}\label{lemma:EH_repn}
   Within the $\mathsf{HybridSlice} \left( \pi, \nu, (H_t) \right)$-setting the Dirichlet form of $H$ satisfies for any $f\in\ELL(\pi)$ that
   \begin{equation*}
        \calE_\pi (H,f) = c^{-1} \int_\sfT \calE_{\nu_t} (H_t, f)  m(t) \, \dif t,
    \end{equation*}
    where for each $t \in \sfT$,
    \begin{equation*}
        \calE_{\nu_t} (H_t, f) = \frac{1}{2} \int_{\mathsf{G}(t)} \int_{\mathsf{G}(t)} \left[ f(x)-f(y) \right]^2 H_t( x, \dif y)  \nu_t(\dif x)<\infty,
    \end{equation*}
    is the Dirichlet form of $H_t$ (which coincides with $\calE_{\nu_t} (H_t, f|_{\mathsf{G}(t)})$).
\end{lemma}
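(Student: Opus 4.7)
The plan is to compute $\calE_\pi(H,f)$ directly using the symmetric representation \eqref{eq:dirichlet_repn}, identify the joint measure $\pi(\dif x) H(x,\dif y)$ in terms of the slice-wise ingredients, and then invoke Fubini/Tonelli to disentangle the $t$-integral. Throughout, the non-negativity of the integrand $[f(x)-f(y)]^2$ will allow unconstrained use of Tonelli, and finiteness will be verified at the end by a single elementary bound.

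First I would expand $\pi(\dif x) H(x,\dif y)$. Writing $\pi(\dif x) = c^{-1} \varpi(x) \nu(\dif x)$ and $H(x,\dif y) = \varpi(x)^{-1} \int_0^{\varpi(x)} H_t(x,\dif y) \, \dif t$ from \eqref{eq:H_def}, the factors of $\varpi(x)$ cancel, giving
\begin{equation*}
    \pi(\dif x) H(x,\dif y) = c^{-1} \nu(\dif x) \int_0^{\varpi(x)} H_t(x,\dif y) \, \dif t.
\end{equation*}
Rewriting $\int_0^{\varpi(x)}(\cdot) \, \dif t = \int_\sfT \mathbf{1}\{x \in \sfG(t)\} \, (\cdot) \, \dif t$ and using $\mathbf{1}\{x \in \sfG(t)\} \cdot \nu(\dif x) = m(t) \cdot \nu_t(\dif x)$ from \eqref{eq:def_pi_t}, I obtain the key identity
\begin{equation*}
    \pi(\dif x) H(x,\dif y) = c^{-1} \int_\sfT m(t) \cdot \nu_t(\dif x) H_t(x,\dif y) \, \dif t.
\end{equation*}
Substituting into \eqref{eq:dirichlet_repn} and swapping the order of integration (Tonelli, as the integrand is non-negative) produces exactly the claimed representation.

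For the finiteness assertion, I would bound each slice-wise Dirichlet form using $(f(x)-f(y))^2 \leq 2 f(x)^2 + 2 f(y)^2$ together with $\nu_t$-reversibility of $H_t$, which yields $\calE_{\nu_t}(H_t,f) \leq 2 \|f\|_{\nu_t}^2$. Then
\begin{equation*}
    \int_\sfT m(t) \cdot \calE_{\nu_t}(H_t, f) \, \dif t \leq 2 \int_\sfT \int_{\sfG(t)} f(x)^2 \, \nu(\dif x) \, \dif t = 2 \int_\sfG f(x)^2 \varpi(x) \, \nu(\dif x) = 2c \cdot \|f\|_\pi^2 < \infty,
\end{equation*}
after another Tonelli swap and the definition of $\sfT$ and $\sfG(t)$. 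This shows both that $\calE_{\nu_t}(H_t,f) < \infty$ for Lebesgue-a.e.\ $t \in \sfT$ and that the overall integral is finite, justifying all manipulations.

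The proof is essentially a bookkeeping exercise; there is no substantive obstacle, only the mild care needed in the Fubini step and the observation that replacing $f$ by its restriction $f|_{\sfG(t)}$ in $\calE_{\nu_t}(H_t, f)$ makes no difference since $\nu_t$ is supported on $\sfG(t)$.
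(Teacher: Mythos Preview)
Your proof is correct and follows essentially the same route as the paper: expand $\pi(\dif x)\,H(x,\dif y)$ via the definitions, swap the $t$-integral to the outside by Tonelli, and recognise the inner integral as $\calE_{\nu_t}(H_t,f)$. The only cosmetic difference is that the paper establishes $\calE_{\nu_t}(H_t,f)<\infty$ for every $t$ up front (via a direct bound on $\|f\|_{\nu_t}^2$), whereas you deduce it for a.e.\ $t$ a posteriori from the finiteness of the integrated quantity; your bound $\calE_{\nu_t}(H_t,f)\le 2\|f\|_{\nu_t}^2$ combined with a pointwise bound on $\|f\|_{\nu_t}^2$ would immediately recover the ``every $t$'' statement if desired.
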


\begin{proof}
     Since $\int_{\mathsf{G}(t)} f^2(x) \, \nu_t (\dif x)  \leq \frac{c}{t \,m(t)}  \int_{\mathsf{G}(t)} f^2(x) \,\pi (\dif x) \leq \frac{c}{t\, m(t)} \|f\|^2_\pi < \infty$, observe that $\mathcal{E}_{\nu_t} (H_t, f) < \infty$ for every $t\in \sfT$. 
    For convenience, we write $g(x,y) := [f(x)-f(y)]^2$ and have by \eqref{eq:dirichlet_repn} that
    \begin{equation*}
        \begin{split}
            2 \, \calE_\pi (H,f)&=  \int_{\mathsf{G}} \int_{\mathsf{G}} g(x,y)  H \left( x, \dif y \right) \, \pi(\dif x)\\ 
            &= c^{-1}  \int_{\mathsf{G}}  \int_0^{\varpi(x)} \int_{\mathsf{G}(t)} \,g(x,y)  H_t \left( x, \dif y \right) \, \dif t \, \nu (\dif x) .
        \end{split}
    \end{equation*}
    Changing the order of integration yields
    \begin{equation*}
        \begin{split}
            & c^{-1} \int_{\mathsf{G}}  \int_0^{\varpi(x)}\int_{\mathsf{G}(t)} \,g(x,y)  H_t \left( x, \dif y \right) \dif t \, \nu(\dif x) 
            \\&= c^{-1}  \int_0^{\|\varpi\|_\infty} \int_{\mathsf{G}(t)} \int_{\mathsf{G}(t)} g(x,y) \, H_t \left( x, \dif y \right)  \nu(\dif x)\dif t\\
            &= c^{-1} \int_\sfT \int_{\mathsf{G}(t)} \int_{\mathsf{G}(t)} g(x,y) H_t \left( x, \dif y \right)  \nu_t(\dif x)  \,m(t)\dif t \\
            &= 2 c^{-1} \int_\sfT \calE_{\nu_t} (H_t, f) m(t)\dif t,
        \end{split}
    \end{equation*}
    where we used the definition of $\nu_t$ from \eqref{eq:def_pi_t}.
\end{proof}

As remarked earlier, Ideal Slice Sampling with transition kernel $U$ coincides with Hybrid Slice Sampling whenever $H_t  = \nu_t $ for all $t\in \sfT$. Therefore, by the previous lemma we have the following representation of $\calE_\pi (U,f)$.
\begin{lemma}\label{cor:SS_rep}
Within the $\mathsf{Slice} \left( \pi, \nu \right)$-setting with Markov kernel $U$, we have for any $f\in \ELL(\pi)$
that
    \begin{equation*}
        \calE_\pi (U,f)= c^{-1}  \int_\sfT \Var_{\nu_t} (f) \, m(t)\,\dif t.
    \end{equation*}
\end{lemma}

\begin{proof}
    Lemma~\ref{lemma:EH_repn} implies
    \begin{equation*}
        \calE_\pi (U,f) = \frac{1}{2 c} \int_\sfT  \int_{\mathsf{G}(t)}  \int_{\mathsf{G}(t)} \left[ f(x)-f(y) \right]^2  \nu_t(\dif y)\,  \nu_t(\dif x)\,  m(t)\dif t.
    \end{equation*}
    Moreover
    \begin{equation*}
        \frac{1}{2} \int_{\mathsf{G}(t)}  \int_{\mathsf{G}(t)} \left[ f(x)-f(y) \right]^2  \nu_t(\dif y)  \nu_t(\dif x) = \nu_t(f^2) - \nu_t(f)^2 =\Var_{\nu_t}(f),
    \end{equation*}
    which finishes the proof.
\end{proof}

We now establish the two comparison inequalities which comprise Theorem~\ref{thm:SS_comp}. We start with the proof of \eqref{eq:H_worseU}. For the convenience of the reader we repeat the statement and the requirements in the following proposition.

\begin{proposition} \label{prop:HleU}
Within the $\mathsf{HybridSlice} \left( \pi, \nu, (H_t) \right)$-setting with Markov kernel $H$, assume that for any $t\in\sfT$ the Markov kernel $H_t$ is positive. 
Then, for the Markov kernel $U$ which is obtained by setting $\mathsf{Slice} \left( \pi, \nu \right)$, for all $f\in \ELL_0(\pi)$ we have 
    \begin{equation*}
        \calE_\pi (H,f) \leq \calE_\pi (U,f).
    \end{equation*}
\end{proposition}
\begin{proof}
   For any $t\in\sfT$  and any $f\in \ELL_0(\nu_t)$ it follows
    by the positivity of $H_t$ that
    \begin{equation*}
        \calE_{\nu_t} (H_t, f) = \langle f, (\Id - H_t) f \rangle_{\nu_t} = \|f\|^2_{\nu_t} - \langle f, H_t f\rangle_{\nu_t} \leq \|f\|^2_{\nu_t}.
    \end{equation*}
     Therefore, by applying Lemma~\ref{lemma:EH_repn}, using $\mathcal{E}_{\nu_t}(H_t,f)= \mathcal{E}_{\nu_t}(H_t,f-\nu_t(f))$ as well as 
    $\Var_{\nu_t}(f) = \Vert f - \nu_t(f) \Vert_{\nu_t}^2$
    and
    Corollary~\ref{cor:SS_rep}
    the assertion is proven.
\end{proof}
We now turn to the proof of \eqref{eq:U_WPI_H}. Again, we repeat the formulation of the statement in the next proposition.
\begin{proposition}     \label{prop:UleH}
Let Assumption~\ref{assms} be satisfied w.r.t. the mapping $\widetilde{\beta}$. Then, with $U$ and $H$ specified in the settings $\mathsf{Slice} \left( \pi, \nu \right)$ and $\mathsf{HybridSlice} \left( \pi, \nu, (H_t) \right)$  respectively, for all $s>0$ and $f\in \ELL(\pi)$ we have
    \begin{equation*}
        \calE _\pi (U,f) \leq s\cdot \calE_\pi (H,f) + \beta(s) \|f\|_{\osc,\pi}^2,
    \end{equation*}
    where $\beta : (0, \infty) \to [0, \infty)$ is given by the map $ s\mapsto \beta(s) :=  c^{-1}  \int_\sfT \widetilde{\beta}(s,t)  m(t) \, \dif t$.
\end{proposition}
\begin{proof}
    For any $t\in\sfT$ and any $f\in \ELL(\pi)$ observe that by the absolute continuity of $\nu_t$ w.r.t. $\pi$ we have $\Vert f \Vert_{\osc,\nu_t} \leq \Vert f \Vert_{\osc,\pi}$.
    Then, by
    Corollary~\ref{cor:SS_rep}, we obtain
    \begin{equation*}
        \begin{split}
            \calE_\pi (U,f) &= c^{-1}  \int_\sfT  \Var_{\nu_t} (f) \, m(t) \, \dif t\\
            &\leq c^{-1}  \int_\sfT  \left [ s \,\calE_{\nu_t} (H_t, f) + \widetilde{\beta}(s, t) \, \|f\|_{\osc,\nu_t}^2\right ] m(t) \dif t\\
            &= s \cdot \calE_\pi (H,f)+ \|f\|_{\osc,\pi}^2 \cdot c^{-1} \int_\sfT 
            \widetilde{\beta}(s,t) \,m(t) \dif t,
        \end{split}
    \end{equation*}
    for any $s>0$, where we have used \eqref{eq:Ht_WPI} and Lemma~\ref{lemma:EH_repn}.
\end{proof}

Finally, regarding Theorem~\ref{thm:SS_comp} it remains to verify the claimed properties of the function $\beta$.
\begin{proposition}
        Let Assumption~\ref{assms} be satisfied w.r.t. the mapping $\widetilde{\beta}$.
        Then, the function $\beta\colon (0,\infty) \to [0,\infty)$ given by $s \mapsto \beta(s) :=  c^{-1} \cdot \int_\sfT \widetilde{\beta}(s,t) \, m(t) \dif t$ is decreasing, and satisfies $\lim_{s\to \infty} \beta(s) = 0$.
    \label{lem:beta_conds}
\end{proposition}

\begin{proof}
    Since $s \mapsto \widetilde{\beta}(s,t)$ is monotone decreasing for each $t \in \sfT$, it follows that $\beta$ is also monotone decreasing. Furthermore, since $\widetilde{\beta} \left( \cdot, t \right) \leq 1/4$ pointwise for all $t \in \sfT$ as emphasized in Remark~\ref{rmk:beta}, and $c^{-1} \cdot \int_\sfT m(t) \dif t =\int_{\mathsf{G}} \pi (\dif x) = 1$, and each $\widetilde{\beta}(s,t)\to 0$ as $s\to \infty$, the dominated convergence theorem yields that
    \begin{equation*}
        \lim_{s\to \infty} \beta(s) = c^{-1} \int_\sfT \lim_{s\to \infty} \widetilde{\beta} \left( s, t \right) \, m(t) \dif t=0.
        \qedhere
    \end{equation*}
\end{proof}
For completeness, we emphasize that combining all three previous propositions proves Theorem~\ref{thm:SS_comp}.

\section{Examples}\label{sec:examples}
Our abstract comparison result of Theorem~\ref{thm:SS_comp} can be further specialised to provide novel analyses of particular algorithms. In this section, we detail various examples to this effect.

\subsection{Metropolis as Hybrid Slice Sampling}

We retain the general regularity requirements of Section~\ref{subsec:comp_SS}, that is, $(\sfG,\mathcal{G},\nu)$ is a $\sigma$-finite measure space, $\varpi \colon \sfG \to (0,\infty)$ with $\varpi \propto \frac{{\rm d}\pi}{{\rm d}\nu}$ for target measure $\pi$ and $\Vert \varpi \Vert_\infty = \mathrm{ess}_\nu \sup \varpi$ by assumption. We start with specifying a general Metropolis setting:

\parskip=1ex
\noindent
\textbf{Setting $\mathsf{Metropolis} \left( \pi, \nu, Q\right)$:}
For $Q\colon \sfG \times \mathcal{G} \to [0,1]$ being a $\nu$-reversible
proposal Markov kernel, define the Metropolis Markov kernel $M\colon \sfG \times \mathcal{G} \to [0,1]$ by 
\begin{align*}
    M \left(x, A\right) = \int_{A} a \left( x, y \right) Q \left(x, {\rm d}y\right) 
     + {\bf 1}_{A} \left( x \right) \cdot \bar{\alpha} \left( x \right), \quad x \in \mathsf{G}, A \in \mathcal{G},
\end{align*}
where for $x, y\in\mathsf{G}$ we set
\begin{align}
\label{al: acc_prob}
    a\left(x, y\right) := \min\left\{1 , \frac{\varpi\left(y\right)}{\varpi\left(x\right)}\right\}, \qquad \bar{\alpha} \left(x\right) := 1 - \int_{\mathsf{G}} a \left( x, y \right)\,Q \left(x, {\rm d}y\right).
\end{align}

It is well known that $M$ is $\pi$-reversible. We describe example settings, essentially derived by concrete choices of $Q$.

\begin{example}[Setting $\mathsf{RWM}(\pi,Q)$]
\label{ex: RWM}

Here $\sfG\subseteq \mathbb{R}^d$, $\mathcal{G}=\mathcal{B}(\sfG)$ is the Borel $\sigma$-algebra and $\nu=\mathrm{Leb}$ denotes the $d$-dimensional Lebesgue measure. For $q\colon \sfG\times\sfG \to [0,\infty)$ let $Q(x,\dif y) := q(x,y) \nu(\dif y)$ be a $\nu$-reversible Markov kernel. Then, the \emph{Random Walk Metropolis} setting, denoted as $\mathsf{RWM}(\pi,Q)$, is defined by the 
$\mathsf{Metropolis}\left(\pi,\mathrm{Leb},Q\right)$-setting. 
Moreover, a conventional choice for $Q$ is $Q_{\sigma} \left( x, \dif y\right) = q_\sigma(x,y) \dif y$
with $q_\sigma(x,y) \propto \exp(-\vert x-y\vert^2/(2\sigma^2))$, where $\vert \cdot \vert$ denotes the Euclidean norm and $\sigma>0$ a step-size parameter.
Let us emphasize that  $Q_\sigma$ considered on $\ELL_{\nu}$ is positive, cf. \cite{Doucet2015intractable}.

\end{example}

\begin{example}[Setting $\mathsf{IM}(\pi, \nu) $]
\label{example:IMH}
Here $(\sfG,\mathcal{G},\nu)$ is a general probability space and $Q \left( x, \cdot \right) = \nu$, i.e. $Q$ does not depend on $x\in\sfG$. Then, the \emph{Independent Metropolis} setting  with `proposal' $\nu$, denoted as $\mathsf{IM}(\pi, \nu)$, 
is defined by the
$\mathsf{Metropolis} \left(\pi, \nu, \nu \right)$-setting. By taking  \cite[]{Doucet2015intractable} (cf. also \cite[end of page~3]{Gasemyr2006spectrum}) into account it follows that $Q$ on $\ELL_{\nu}$ and the corresponding $\pi$-reversible Metropolis Markov kernel $M$ is also positive.
\end{example}

\begin{example}[Setting $\mathsf{pCN}(\pi,\mathsf C, s)$]
Here let $\sfG$ be a possibly infinite-di\-men\-sional separable Hilbert space and $\mathcal{G}=\mathcal{B}(\sfG)$ be the corresponding Borel $\sigma$-algebra. Assume that $\nu=\mathcal{N}(0,\mathsf C)$ is a Gaussian measure on $(\sfG,\mathcal{G})$ with $\mathsf C\colon \sfG \to \sfG$ being a nonsingular covariance operator. For step-size parameter $s\in[0,1]$, set $Q_{s,\mathsf{C}}(x,\cdot):= \mathcal{N}(\sqrt{1-s^2}x,s^2 \mathsf{C})$. Then, the \emph{Preconditioned Crank-Nicolson Metropolis} setting, denoted as $\mathsf{pCN}(\pi,\mathsf C,s)$, is defined by the $\mathsf{Metropolis} \left( \pi, \mathcal{N}(0,\mathsf{C}), Q_{s,\mathsf{C}} \right)$-setting, see e.g. \citep{cotter2013mcmc}.
Moreover, it is known -- see \cite[Lemma~9, Theorem~11 with $\Gamma=0$]{rudolf2018generalization} and also \cite{Baxendale2005geometric} -- that $Q_{s,\mathsf{C}}$ on $\ELL_{\nu}$ as well as the corresponding $\pi$-reversible Metropolis Markov kernel $M$ is positive. 
\end{example}

As observed in \cite[Section~2.1]{higdon1998auxiliary}, in a finite state space setting, Metropolis Markov kernels can be seen as instances of Hybrid Slice Samplers. We extend this statement to general state spaces and add a comparison result.

\begin{proposition}
\label{lem:metropolis-are-hss}
Within the $\mathsf{Metropolis} \left(\pi, \nu, Q \right)$-setting, with resulting Metropolis Markov kernel $M$,
let the $\mathsf{HybridSlice} \left(\pi, \nu, (H_t) \right)$-setting, with resulting Mar\-kov kernel $H$, be specified by  
\begin{align*}
    H_{t} \left(x, A\right) = Q \left(x, A\cap \sfG \left(t \right) \right) + \mathbf{1}_{A} \left(x\right) (1-Q(x,\sfG(t))), \quad t\in\sfT, x\in\sfG(t), A\in\mathcal{G}.
\end{align*}
Then $M=H$, that is, the settings coincide. Moreover, regarding the setting $\mathsf{Slice} \left( \pi, \nu \right)$, with resulting Markov kernel $U$, we have for positive $Q$ (on $\ELL_{\nu}$) that $U$ dominates $M$, i.e. $\mathcal{E}_\pi(M,f) \leq \mathcal{E}_\pi(U,f)$ for all $f\in \ELL_0(\pi)$. 
\end{proposition}

\begin{proof}
For $x\in \sfG$ and $A\in\mathcal{G}$ we have
\begin{align*}
   & H\left(x,A\right)  =
   \frac{1}{\varpi \left( x \right)}
   \int_{0}^{\varpi \left( x \right)} H_{t} \left( x, A \right) \, \dif t\\
        =& \frac{1}{\varpi \left( x \right)} \int_{0}^{\varpi \left( x \right)} Q \left( x, A \cap G \left( t \right) \right) \, \dif t+  \frac{\mathbf{1}_{A} \left( x \right)}{\varpi \left( x \right)} \int_{0}^{\varpi \left( x \right)} Q ( x, 
        \sfG \left(t\right)^{\mathsf{c}}) \, \dif t \\
        =& \frac{1}{\varpi \left( x \right)}  \int_{0}^{\varpi \left( x \right)} \int_{A} \mathbf{1}_{\sfG \left( t \right)} \left( y \right)
        Q \left( x, \dif y \right)  \, \dif t
        + \frac{\mathbf{1}_{A} \left( x \right)}{\varpi \left( x \right)} \int_{0}^{\varpi \left( x \right)} \int_{\mathsf{G}}  \mathbf{1}_{\sfG \left( t \right)^{\mathsf{c}}} \left( y \right) Q \left( x, \dif y \right) \dif t\\
        =& \frac{1}{\varpi \left( x \right)} \int_{0}^{\varpi \left( x \right)} \int_{A} \mathbf{1}_{(0,\varpi(y))}(t) \, Q \left( x, \dif y \right) \, \dif t\\
        &\qquad \qquad + \frac{\mathbf{1}_{A} \left( x \right)}{\varpi \left( x \right)} \int_{0}^{\varpi \left( x \right)} \int_{\mathsf{G}}  \mathbf{1}_{[\varpi(y),\infty)} (t) \, Q \left( x, \dif y \right)  \dif t\\
        =& \frac{1}{\varpi \left( x \right)} \int_{\sfT} \int_{A} \mathbf{1}_{(0,\min\{\varpi(x),\varpi(y)\})} (t) \,Q \left( x, \dif y \right)  \dif t\\
        &\qquad \qquad + \frac{\mathbf{1}_{A} \left( x \right)}{\varpi \left( x \right)} \int_{\sfT} \int_{\mathsf{G}}  \mathbf{1}_{[\varpi(y),\varpi(x))}(t)  Q \left( x, \dif y \right)\dif t\\
        =& \frac{1}{\varpi \left( x \right)}\int_{A}   \min\{ \varpi \left( x \right), \varpi \left( y \right)\}\,  Q \left( x, \dif y \right)\\
        &\qquad \qquad + \frac{\mathbf{1}_{A} \left( x \right)}{\varpi \left( x \right)}  \int_{\mathsf{G}} ( \varpi \left( x \right) - \min \{ \varpi \left( x \right), \varpi \left( y \right) \})Q \left( x, \dif y \right)\\
        =& \int_{A} a \left( x, y \right) Q \left( x, \dif y \right) + \mathbf{1}_{A} (x ) \cdot \bar{\alpha} \left( x \right) = M(x,A).
\end{align*}
By the fact that $Q$ is $\nu$-reversible and positive, it follows that $H_t$ is $\mu_t$-reversible and positive on $\ELL(\nu_t)$ for any $t\in\sfT$. Finally, Theorem~\ref{thm:SS_comp} implies the dominancy statement.
\end{proof}

Considering the scenarios of the examples of this section allows us to conclude the following comparison statements.

\begin{corollary}
    \begin{enumerate}
        \item 
        Metropolis Markov kernel $M$ within the setting 
        $\mathsf{RWM} \left( \pi, Q\right)$, with $Q=Q_\sigma$ from Example~\ref{ex: RWM} for $\sigma>0$, is dominated by Markov kernel $U$ from the corresponding setting 
        $\mathsf{Slice}  \left(\pi,\mathrm{Leb}\right)$.
        \item 
        Metropolis Markov kernel $M$ within the setting $\mathsf{IM} \left(\pi, \nu \right)$  is dominated by Markov kernel $U$ from the corresponding setting $\mathsf{Slice} \left(\pi, \nu \right)$.
        \item Metropolis Markov kernel $M$ within the setting $\mathsf{pCN}(\pi,\mathsf C, s)$ is dominated by Markov kernel $U$ from the corresponding setting $\mathsf{Slice} \left( \pi, \mathcal{N}(0,\mathsf{C})\right)$.
    \end{enumerate}
\end{corollary}

Elaborating on the application of Theorem~\ref{thm:SS_comp} enables us to partially reverse these comparisons, demonstrating that in certain cases, Hybrid Slice Samplers may be almost as efficient as their ideal analogs.

\subsubsection{Independent Metropolis -- Example~\ref{example:IMH} reconsidered}

We illustrate the application of \eqref{eq:U_WPI_H} of Theorem~\ref{thm:SS_comp} and discuss its possible limitations. 
Recall that in the $\mathsf{IM} \left( \pi, \nu \right)$-setting of Example~\ref{example:IMH} a probability space $(\sfG,\mathcal{G},\nu)$ is present. 
The corresponding Metropolis Markov kernel is given as
\begin{align*}
    M \left( x, A \right) &= \int_A a(x,y) \nu \left( \dif y \right)  + \mathbf{1}_A \left( x \right) \cdot \left( 1-\int_{\sfG} a(x,y) \,\nu(\dif y) \right), \quad x \in \mathsf{G}, A \in \mathcal{G},
\end{align*}
with $a(x,y)$ as defined in \eqref{al: acc_prob}.
From Proposition~\ref{lem:metropolis-are-hss} we know that $M$ coincides with a Markov kernel $H$ based on suitable Hybrid Slice Sampling. In this framework we verify Assumption~\ref{assms} for the corresponding sequence of Markov kernels $(H_t)_{t\in\sfT}$. The result can be achieved by applying \cite[Proposition~27]{Andrieu2021comparison}, where we interpret $H_t$ with $t\in\sfT$ as suitable independent Metropolis kernel. However, for the convenience of the reader we prove the result directly and add an immediate consequence.

\begin{proposition}
\label{prop: IMH_gen_comp}
    For $t\in\sfT$ let Markov kernel $H_t$ on $\sfG(t)$ be given as
    \[
        H_t(x,A) = \nu(A\cap \sfG(t)) + \mathbf{1}_A(x)(1-\nu(\sfG(t))), \quad x\in \sfG(t), A\in\mathcal{G}.
    \]
    Then, $(H_t)_{t\in\sfT}$ satisfies Assumption~\ref{assms} with $\widetilde{\beta}(s,t) = 1/4\cdot \mathbf{1}_{(0,m(t)^{-1})}(s) $. Consequently, for $U$ from the $\mathsf{Slice}  \left(\pi,\nu\right)$-setting we have for all $s>0$ and $f\in \ELL_0(\pi)$ that
 \begin{align}
 \label{al: comp_IM_SS}
    \calE_\pi (M,f) \le \calE_\pi (U,f)\leq s\, \calE_\pi (M,f) + \frac{\|f\|^2_{\osc,\pi}}{4c}   \int_{\sfT} \mathbf{1}_{(0,m(t)^{-1})}(s) \, m \left( t \right) \dif t.
\end{align}
\end{proposition}
\begin{proof}
    By Proposition~\ref{lem:metropolis-are-hss}, $H$ from setting $\mathsf{HybridSlice} \left(\pi, \nu, (H_t) \right)$ coincides with $M$. Moreover, for any $t\in\sfT$ Markov kernel $H_t$ is positive and reversible, such that the first inequality of \eqref{al: comp_IM_SS} follows. For verifying \eqref{eq:Ht_WPI} we distinguish two cases.
    Firstly, let $m(t)^{-1}\leq s$. Then, for all $g\in \ELL_{0}(\nu_t)$, we have 
    \begin{align*}
        \Vert g \Vert_{\nu_t}^2 
        = \frac{1}{2} \int_{\sfG(t)}\int_{\sfG(t)}
        (g(x)-g(y))^2 \nu_t(\dif y)\nu_t(\dif x)
        \leq s \cdot \mathcal{E}_{\nu_t}(H_t,g).
    \end{align*}
    Secondly, let $m(t)^{-1}> s$. Then, for all $g\in \ELL_{0}(\nu_t)$, by  Popoviciu's inequality we have $\Vert g \Vert_{\nu_t}^2 \leq 1/4 \cdot \Vert g \Vert^2_{\osc, \nu_t}$. 
    Taking both cases together yields \eqref{eq:Ht_WPI} with $\widetilde{\beta}(s,t) = 1/4\cdot \mathbf{1}_{(0,m(t)^{-1})}(s) $. Finally, applying Corollary~\ref{cor:Ht_gap_comp} proves the second inequality of \eqref{al: comp_IM_SS} and the statement is shown.
\end{proof}

\begin{example}
    \label{ex:IMH_exp}
   We consider $\sfG=[0,\infty)$, equipped with the corresponding Borel $\sigma$-algebra $\mathcal{G}$ and exponential distribution $\nu$ with parameter $\lambda>0$, i.e. $\nu(\dif x)=\lambda e^{-\lambda x} \, \dif x$. Moreover, let $\pi$ also be an Exponential distribution with parameter $1$ on $(\sfG,\mathcal{G})$. Then, $\pi(\dif x) =  e^{-x}\, \dif x = \lambda^{-1} \,e^{-(1-\lambda)x} \nu(\dif x)$, such that $\varpi(x)=e^{-(1-\lambda)x}$. We distinguish two cases: 
   
Firstly, let $\lambda\in(0,1)$. Then, from Proposition~\ref{prop: Exp_USS_spec_gap} of Appendix~\ref{app:exp_IMH} we know for  all $f\in \ELL_0(\pi)$ that
    $
    \mathcal{E}_\pi(U,f) \geq \Vert f \Vert_\pi^2
        (1+\lambda)/2.
    $
    Therefore, by Proposition~\ref{prop: IMH_gen_comp} we have
    \[ \Vert f \Vert_\pi^2
\leq \frac{2s}{(1+\lambda)}\, \calE_\pi (M,f) + \frac{\|f\|^2_{\osc,\pi}}{2(1+\lambda)c}   \int_0^{1} \mathbf{1}_{(0,m(t)^{-1})}(s) \, m \left( t \right) \dif t.\]
By the fact that $m(t)=1-t^{\lambda/(1-\lambda)}$ we consider $b\colon(0,\infty) \to (0,\lambda]$, given by
 \begin{align*}
  b(s) & :=  \int_0^1  \mathbf{1}_{(0,m(t)^{-1})}(s) \, m \left( t \right) \dif t
     = \int_{(\max\{0,1-s^{-1}\})^{(1-\lambda)/\lambda}}^{1} (1-t^{\lambda/(1-\lambda)})\dif t\\
    & = \Big[t-(1-\lambda)t^{(1-\lambda)^{-1}}\Big]_{(\max\{0,1-s^{-1}\})^{(1-\lambda)/\lambda}}^{1}\\
    & = \lambda - \max\{0,1-s^{-1}\})^{(1-\lambda)/\lambda} + (1-\lambda) \max\{0,1-s^{-1}\})^{1/\lambda}\\
    & = \begin{cases}
          \lambda - (1-s^{-1})^{(1-\lambda)/\lambda}
          + (1-\lambda) (1-s^{-1})^{1/\lambda},
          & s>1,\\
          \lambda,                  & s\in (0,1],
        \end{cases}
 \end{align*}
 and have with $c=\lambda$ as well as a rescaling argument that $M$ from the $\mathsf{IM} \left( \pi, \nu \right)$-setting satisfies a WPI w.r.t. 
 \[
    \beta(s) := (2(1+\lambda)\lambda)^{-1} b((1+\lambda)s/2).
 \]
 
 Secondly, assume that $\lambda\in [1,\infty)$. Then, again from Proposition~\ref{prop: Exp_USS_spec_gap} of Appendix~\ref{app:exp_IMH} we know for  all $f\in \ELL_0(\pi)$ that
    $
    \mathcal{E}_\pi(U,f) \geq \Vert f \Vert_\pi^2
        \big(2\lambda-1\big)^2.
    $
    Therefore, by Proposition~\ref{prop: IMH_gen_comp} we have
    \[ \Vert f \Vert_\pi^2
\leq \frac{s}{(2\lambda-1)^2}\, \calE_\pi (M,f) + \frac{\|f\|^2_{\osc,\pi}}{4(2\lambda-1)c}   \int_0^{\infty} \mathbf{1}_{(0,m(t)^{-1})}(s) \, m \left( t \right) \dif t.\]
By the fact that $m(t)=(\max\{1,t\})^{\lambda/(1-\lambda)}$ we consider 
 \begin{align*}
   \int_0^\infty  \mathbf{1}_{(0,m(t)^{-1})}(s) \, m \left( t \right) \dif t
  &   = \mathbf{1}_{(0,1)}(s) +\int_{s^{-(1-\lambda)/\lambda}}^{\infty} t^{\lambda/(1-\lambda)}\dif t\\
  &   =  \mathbf{1}_{(0,1)}( s)
    + (\lambda-1)s^{-1/\lambda}
 \end{align*}
 and have with $c=\lambda$ as well as a rescaling argument that $M$ from the $\mathsf{IM} \left( \pi, \nu \right)$-setting satisfies a WPI w.r.t. 
 \begin{equation}
 \label{eq: beta_IMH}
       \beta(s) := \frac{1}{4(2\lambda-1)\lambda}\left(
    \mathbf{1}_{(0,(2\lambda-1)^{-2})}(s)+\frac{(\lambda-1)s^{-1/\lambda}}{(2\lambda-1)^{2/\lambda}} 
    \right).
 \end{equation}
\end{example}

\begin{remark} In the previous example, in the case $\lambda\in[1,\infty)$, the function $\beta$ behaves for $s>1$ as $s\mapsto s^{-1/\lambda}(\lambda-1)/(4\lambda)$, which for $s\to\infty$ decays to zero at a slower rate than $\bar{\beta}(s) = s^{-1/(\lambda-1)}$. From \cite[Proposition~27, Lemma~28]{Andrieu2021comparison}, we know that within the $\mathsf{IM} \left( \pi, \nu \right)$-setting, $M$ in fact satisfies a WPI w.r.t. $\bar{\beta}$. 
Thus while our hybrid analysis is not sharp in this particular example, which is not to be expected given its generality, it is quantitatively not too far off. The loss here could due to a possibly suboptimal estimate of Proposition~\ref{prop: Exp_USS_spec_gap}
regarding the ideal slice sampler.

\end{remark}

\subsection{Stepping-out and shrinkage on bimodal targets}

Given the general difficulty of sampling the uniform distribution on a slice already in a univariate setting, Neal proposed in \cite{Neal2003Slice} a stepping-out-shrinkage procedure to construct Hybrid Slice Samplers that mimic an ideal uniform one. 

We consider 
 $\sfG\subseteq\mathbb{R}$ equipped with $\mathcal{G}=\mathcal{B}(\sfG)$ and $\nu$ being the $1$-dimensional Lebesgue measure. Moreover, we introduce a special type of bimodality that was previously considered in \cite[Section~4.1]{Latuszynski2014convergence}.

\begin{definition}\label{def:t_bimodal}
Let $\varpi: \sfG \to (0,\infty)$ 
and let $t_1,t_2\in (0,\Vert \varpi \Vert_{\infty})$ satisfying $t_{1} \leq t_{2}$. We say that $\varpi$ is \emph{$\left( t_{1}, t_{2} \right)$-bimodal} if 
\begin{itemize}
    \item for all $t \in \mathsf{T} \setminus \left[ t_{1}, t_{2} \right)$, the superlevel set $\mathsf{G} \left( t \right)$ (of $\varpi$) consists of a single interval, and
    \item for all $t \in \left[ t_{1}, t_{2} \right)$, the superlevel set $\mathsf{G} \left( t \right)$ consists of a pair of disjoint sub-intervals $\mathsf{G} \left( t \right) = \mathsf{G}_{1} \left( t \right) \sqcup \mathsf{G}_{2} \left( t \right)$ such that commonly-labelled sub-intervals are nested, i.e. for $i = 1, 2$ and $t_{1} \leq s \leq t <t_{2}$, there holds the inclusion $\mathsf{G}_{i} \left( t \right) \subseteq \mathsf{G}_{i} \left( s \right)$.
\end{itemize}
Moreover, given a $(t_1,t_2)$-bimodal density $\varpi$, define $\delta_{\varpi}:(0,\Vert \varpi \Vert_\infty )\to\left[0,\infty\right)$ by
\begin{align*}
    \delta_{\varpi}(t)= 
    \begin{cases}
        \mathsf{dist} \left(\mathsf{G}_{1} (t), \mathsf{G}_{2} (t) \right) & t \in \left[t_{1},t_{2}\right)\\
        0 & \text{otherwise},
\end{cases}
\end{align*}
where $\mathsf{dist} \left( \mathsf{G}_{1} \left(t\right), \mathsf{G}_{2} \left(t\right) \right) := \inf \left\{ \left|x-y\right| : x \in \mathsf{G}_{1} \left( t \right), y \in \mathsf{G}_{2} \left( t \right) \right\}$. Furthermore set $\Delta_{\varpi} := \sup\left\{ \delta_{\varpi}\left(t\right) : t \in (0,\Vert \varpi \Vert_\infty )\right\}$ and $m_\varpi := \inf \{ \nu ( \sfG(t) ) : t\in\left[t_{1},t_{2}\right) \}$ with the convention that the infimum over the empty set is $\infty$.
\end{definition}
Note that, any \textit{unimodal} density $\varpi$ is $(t,t)$-bimodal for any $t\in(0,\Vert \varpi \Vert_{\infty})$ with $\Delta_{\varpi}=0$ and $m_\varpi=\infty$. For
any \textit{bimodal} density $\varpi$ with two global maxima areas the number $\Delta_\varpi$ denotes the distance between them and in the case of two maxima points $m_\varpi=0$ with the modes being separated by $\Delta_\varpi$. If there are not two global maxima, then the modes are at least $\Delta_\varpi$ far apart from each other and $m_\varpi>0$.

For $\varpi \colon \sfG \to (0,\infty)$ and
parameter $h>0$ we consider the stepping-out shrinkage Hybrid Slice Sampler introduced in \cite{Neal2003Slice}. For a schematic view of the transition mechanism we refer to Figure~\ref{fig:step_slice} and for a pseudo-code description check \cite[Algorithm~1]{Latuszynski2014convergence}. We emphasize that the mechanism/algorithm only requires as input the parameter $h$ and an oracle that provides function evaluations of $\varpi$ upon request.

\begin{figure}
    \centering
    \includegraphics[width=0.925\linewidth]{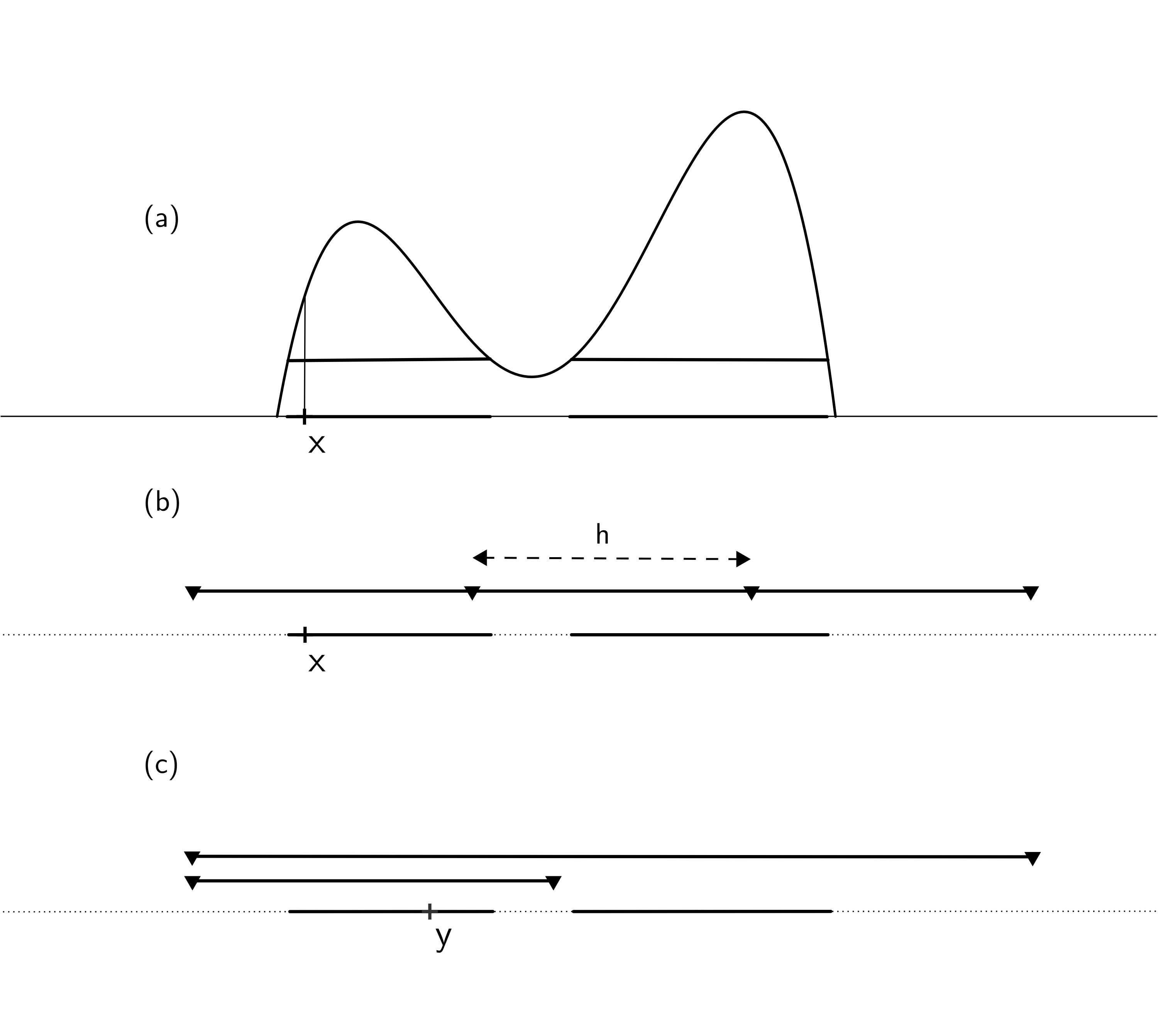}
    \caption{A single transition of the 
    stepping-out and shrinkage Hybrid Slice Sampler from $x$ to $y$ is schematically presented:
    (a) A slice is chosen uniformly distributed from $(0,\varpi(x))$, 
    see horizontal bold line for the slice.
    (b) An interval of width $h$ is randomly positioned around $x$ and
    expanded in steps of size $h$ until both ends are outside the slice.
    (c) The next state $y$ is found by sampling uniformly within the interval 
    until a state is inside the slice. Points picked that are outside the
    slice are used to shrink the interval.}
    \label{fig:step_slice}
\end{figure}

Fix $h\geq\Delta_\varpi$ and
assume that there are $t_1,t_2\in (0,\Vert \varpi \Vert_{\infty})$, such that $\varpi$ is $(t_1,t_2)$-bimodal.
Then, the algorithmic description yields for $t\in(0,\Vert \varpi\Vert_\infty)$
the stepping-out shrinkage Markov kernel $H_t$ on $\sfG(t)$, check \cite[Lemma~14]{Latuszynski2014convergence}, which is for $x\in \sfG(t)$ and $A\in\mathcal{B}(\sfG(t))$ given by
\begin{align*}
    H_t \left( x, A \right) = \lambda\left(t\right) \cdot \nu_t(A) + \left( 1 - \lambda \left(t\right) \right) \cdot \left[ \mathbf{1}_{\mathsf{G}_1(t)}(x) \cdot \nu_{t,1}(A) + \mathbf{1}_{\mathsf{G}_2(t)}(x) \cdot\nu_{t,2}(A)\right],
\end{align*}
where
\begin{align*}
    \lambda\left(t\right)	&:=
    \begin{cases}
        \frac{h-\delta_{\varpi}\left(t\right)}{h}\cdot\frac{\nu\left(\sfG(t)\right)}{\nu\left(\sfG(t)\right)+\delta_{\varpi}\left(t\right)} & t\in\left[t_{1},t_{2}\right);\\
        1 & \text{otherwise};
    \end{cases} \\
    \nu_{t,i}\left(A\right)	&:=
    \begin{cases}
        \frac{\nu\left(\mathsf{G}_{i}\left(t\right)\cap A\right)}{\nu\left(\mathsf{G}_{i}\left(t\right)\right)} & t\in\left[t_{1},t_{2}\right);\\
        0 & \text{otherwise};
    \end{cases}\quad\text{for }i=1,2.
\end{align*}
Observe that for $t\in [t_1,t_2)$, the measure $\nu_{t,i}$ coincides with the uniform distribution in $\sfG_i(t)$, and for $t \in (0\Vert \varpi \Vert_\infty)\setminus [t_1,t_2)$, it holds that $H_t = \nu_t$, i.e. one samples exactly from the uniform distribution on the superlevel set. 
We now summarize properties of  $H_{t}$ which are proven in \cite[Lemma~9]{Latuszynski2014convergence}.

\begin{proposition}
For $\varpi\colon \sfG \to (0,\infty)$ assume that there are $t_1,t_2\in (0,\Vert \varpi \Vert_{\infty})$, such that $\varpi$ is $(t_1,t_2)$-bimodal and let $h\geq \Delta_\varpi$. Then, for any $t\in(0,\Vert \varpi\Vert_\infty)$ the Markov kernel $H_t$ is $\nu_t$-reversible and positive (on $\ELL(\nu_t)$). Moreover, for any $f\in \ELL_0(\nu_t)$ we have
    \begin{align*}
        \mathcal{E}_{\nu_{t}} \left(H_{t},f\right) \geq \lambda\left(t\right) \Vert{f\Vert}_{\nu_{t}}^2.
    \end{align*}
    In particular, within the $\mathsf{HybridSlice} \left( \pi, \nu, \left(H_{t} \right) \right)$-setting  Assumption~\ref{assms} is satisfied with $\widetilde{\beta} (s,t)= 1/4 \cdot \mathbf{1}_{(0,\lambda (t)^{-1})} (s).$
    \end{proposition}

As a consequence of our main theorem we  obtain the following comparison result between Ideal and stepping-out shrinkage Hybrid Slice Sampling.

\begin{corollary}
    For $\varpi\colon \sfG \to (0,\infty)$ and $t_1,t_2\in (0,\Vert \varpi \Vert_{\infty})$ let $\varpi$ be $(t_1,t_2)$-bimodal and set $h\geq \Delta_\varpi$. Then, for $H$
    from the $\mathsf{HybridSlice} \left( \pi, \nu, \left(H_{t} \right) \right)$-
    and $U$ from the $\mathsf{Slice} \left(\pi,\nu\right)$-setting for all $f\in \ELL_0(\pi)$ we have 
    \begin{align*}
        \rho \, \mathcal{E}_\pi\left(U,f\right) \leq \mathcal{E}_\pi\left(H,f\right) \leq \mathcal{E}_\pi\left(U,f\right),
    \end{align*}
    where
    \begin{align*}
        \rho := \frac{h-\Delta_\varpi}{h} \cdot \frac{m_{\varpi}}{m_{\varpi}+\Delta_\varpi}. 
    \end{align*}
\end{corollary}

\begin{proof}
    For $t \in \left[t_{1},t_{2}\right)$, observe that i) $\lambda \left(t\right) = \frac{h-\delta_\varpi\left(t\right)}{h} \cdot \frac{\nu(\sfG(t))}{\nu(\sfG(t) + \delta_\varpi\left(t\right)}$, which is a decreasing function of $\delta_\varpi\left(t\right)$ and an increasing function of $\nu(\sfG(t))$, and ii) that $\delta_\varpi\left(t\right)\leq\Delta_\varpi$ and $\nu(\sfG(t))\geq m_{\varpi}$ by construction, so that $\lambda \left(t\right) \geq\rho$. For $t\in(0,\Vert \varpi \Vert_\infty) \setminus \left[t_{1},t_{2}\right)$, we even have $\lambda\left(t\right) = 1 \geq \rho$, and so that $\lambda\left(t\right) \geq \rho$ for all $t \in (0,\Vert \varpi \Vert_\infty)$. The first inequality then follows by Corollary~\ref{cor:Ht_gap_comp}; the second follows as a direct consequence of Theorem~\ref{thm:SS_comp}.
\end{proof}

The latter result implies a bound on
the relative spectral gap of the hybrid sampler, compared to its ideal counterpart, thereby quantifying the degradation in performance which is incurred by the use of the stepping-out and shrinkage procedure. Moreover, it provides a quantitative version of \cite[Corollary~2]{Latuszynski2014convergence}. We further observe:

\begin{enumerate}
    \item For fixed $\eta_1\in(0,1)$, by taking $h$ sufficiently large, one can always ensure $\rho \geq(1-\eta_1) \frac{m_{\varpi}}{m_{\varpi}+\Delta_{\varpi}}$. For $m_{\varpi}>0$ this quantity is small when $\Delta_{\varpi}$ compared to $m_{\varpi}$ is large, i.e. when the maximal distance between the two modes is large relative to the Lebesue mass around the more peaked mode.
    \item By taking $h = \left(1+\eta_2\right) \cdot \Delta_\varpi$ for some $\eta_2 > 0$, one can ensure that $\rho \ge \frac{\eta_2}{1+\eta_2} \cdot \frac{m_{\varpi}}{m_{\varpi} + \Delta_\varpi}$. Intuitively, by taking $h = 2 \Delta_\varpi $, the performance guarantee is only worse than that of $h = \infty$ by a factor of $2$.
 \end{enumerate}

\subsection{Hit-and-Run within slice sampling}
\label{sec: HaR_within_ss}

Consider $(\sfG,\mathcal{G},\nu) = (\mathbb{R}^d,\mathcal{B}(\mathbb{R}^d),\mathrm{Leb})$, such that $\varpi \colon \sfG \to (0,\infty)$ is the not necessarily normalized Lebesgue density of the target measure $\pi$ which satisfies $\Vert \varpi \Vert_\infty = \mathrm{ess}_\nu \sup \varpi$.
In this setting for any $t\in (0,\Vert \varpi \Vert_\infty)$ the probability measure $\nu_t$ fulfills 
\[
\nu_t(A) = \frac{\mathrm{Leb}(A\cap \sfG(t))}{\mathrm{Leb}(\sfG(t))}, \quad A\in\mathcal{B}(\sfG(t)).
\]
Hence it coincides with the uniform distribution on $\sfG(t)$, cf. also Algorithm~\ref{alg:SSS}. Therefore, Ideal Slice Sampling requires sampling of uniform distributions in $\mathbb{R}^d$ and an appropriate Hybrid Slice Sampler can be based on the classical \textit{Hit-and-Run} algorithm, cf. \cite{Lovasz2004hitandrun,Smith1984efficient}, that offers approximate Markov chain sampling of $\nu_t$.

We begin by describing Hit-and-Run on a generic convex body $K\subset \mathbb{R}^d$, namely a convex bounded set with non-empty interior. 
Let $B(x,r):=\{z\in\mathbb{R}^d\colon \vert z\vert \leq r\}$ be the Euclidean ball of radius $r>0$ around $x\in\mathbb{R}^d$, where $\vert \cdot \vert$ denotes the Euclidean norm.
Let $S_{d-1}:= \partial B(0,1)$ be the unit sphere in $\Rd$, that is, the boundary of the ball around $0$ with radius $1$, and denote by $\mathcal{U}_{d-1}$ the uniform distribution on $S_{d-1}$.
For $x\in K, \theta\in S_{d-1}$ let $\ell_K(x,\theta) := \{ s\in\mathbb{R} \colon x+s\theta \in K \}$, so that the Hit-and-Run Markov kernel can be represented as 
\[
    \mathcal{H}_K(x,A) 
    := \int_{S_{d-1}} \int_{\ell_K(x,\theta)} 
   \frac{\mathbf{1}_A(x+s\theta)}{\mathrm{Leb}(\ell_K(x,\theta))}\dif s\; \mathcal{U}_{d-1}(\dif \theta).
\]
The corresponding transition mechanism first samples a direction $\theta$ uniformly distributed from $S_{d-1}$, and then samples a point uniformly distributed on the intersection of $K$ and the line passing through $x$ in direction $\theta$. It
is well-known that $\mathcal{H}_K$ is reversible w.r.t. the uniform distribution on $K$, say $\mu_K$, and in \cite[Section~3.1]{Rudolf2013positivity} it has been established that it is positive on $\ELL(\mu_K)$.

Moreover, an immediate consequence of \cite{Lovasz2004hitandrun} (that has essentially also been used in \cite[Section~4.2]{Ru12}), via Cheeger's inequality \cite{lawler1988bounds} and the aformentioned positivity, yields the following statement about the spectral gap of $\mathcal{H}_K$.
\begin{theorem}\label{thm:lv-handr-conductance}
    Let $K\subseteq \Rd$ be a convex body with
    \begin{align*}
    r_K &:= \sup \left\{ r > 0 : \exists x \in K\, \text{ such that } \, B(x, r) \subseteq K \right\}, \\
    R_K &:= \inf \left\{ R > 0 : \exists x \in K \, \text{ such that } \, B(x, R) \supseteq K \right\}.
\end{align*}
With the condition number $\kappa_K:= \frac{R_K}{r_K} \in [1, \infty)$ we have for all $f\in\ELL_0(\mu_K)$ that
    \begin{align*}
       \mathcal{E}_{\mu_K}(\mathcal{H}_K,f)& \geq 2^{-33}
        d^{-2} \kappa_K^{-2} \Vert f \Vert_{\mu_K}^2.
    \end{align*}
\end{theorem}

\noindent
For meaningful statements regarding Hit-and-Run \textit{within} slice sampling, i.e. for employing Hit-and-Run on the level sets, we require further terminology regarding $\varpi$. 
\begin{definition}  \label{def: potential}
 Define the \textit{potential} (w.r.t. $\pi$ or $\varpi$) as
$V(x):=-\log \varpi(x)$ for all $x\in\mathbb{R}^d$ and note that
$\pi(\dif x) \propto \varpi(x) \dif x = \exp(-V(x)) \dif x$. We say that $\pi$ is \textit{log-concave} if the potential $V\colon \mathbb{R}^d \to \mathbb{R}$ is convex. Moreover, we call $\pi$ \textit{quasi-log-concave} if $V$ is quasi-convex, i.e. for any
$t\in \mathbb{R}$ the set
\[
\{z\in\mathbb{R}^d\colon V(z)< t\}
\]
is convex, that is, all sub-level sets of $V$ are convex.
\end{definition}
Observe, that if $\pi$ is quasi-log-concave, then for any $t\in
\sfT
$ the level set 
$\sfG(t)=\{x\in\mathbb{R}^d\colon \varpi(x)>t\}$ is also convex. Now we are able to state our first comparison result, which is a consequence of Theorem~\ref{thm:lv-handr-conductance} and Corollary~\ref{cor:Ht_gap_comp}.

\begin{corollary}\label{cor:qlc-hss-handr} 
    Let $\pi$ be quasi-log-concave and for any $t\in
    \sfT
    $ set 
    \begin{align*}
    r(t) &:= \sup \left\{ r > 0 \colon \exists x \in \sfG(t) \, \text{ such that } \, B(x, r) \subseteq \sfG(t) \right\}, \\
    R(t) &:= \inf \left\{ R > 0 \colon \exists x \in \sfG(t) \, \text{ such that } \, B(x, R) \supseteq \sfG(t) \right\}.
\end{align*}
Define the \emph{conditioning profile} $\kappa_{\varpi} ( t ):= \frac{R(t)}{r(t)}$. Let $U$ be specified by the setting $\mathsf{Slice} \left( \pi, \mathrm{Leb}\right)$ and the Hit-and-Run within slice sampling Markov kernel $H$ be determined by the setting $\mathsf{HybridSlice} ( \pi, \nu, (H_t))$ with $H_t = \mathcal{H}_{\sfG(t)}$ for any $t\in \sfT
$, i.e. a Hit-and-Run transition on $\sfG(t)$ is performed. Then, for all $s>0$ and $f\in \ELL(\pi)$ we have
 \begin{align}
 \label{al: Hit_and_run_SS}
    \calE_\pi (H,f) \le \calE_\pi (U,f)\leq s\, \calE_\pi (H,f) + \frac{\|f\|^2_{\osc,\pi}}{4c}   \int_{\sfT} \mathbf{1}_{(0,2^{33}\, d^{2} \kappa_{\varpi}(t)^{2})}(s) \, m \left( t \right) \dif t.
\end{align}
    In particular, if $\sup_{t \in \sfT} \kappa_{\varpi} ( t ) = \overline{\kappa}_\varpi < \infty$, then 
    \begin{align*}
        \mathcal{E}_\pi \left( U, f \right) & \leq 2^{33} \,d^2  \overline{\kappa}_\varpi^2 \, \mathcal{E}_\pi \left( H, f \right).
    \end{align*}
\end{corollary}

\begin{proof}
For any $t\in\sfT$, by the quasi-log-concavity, $\sfG(t)$ is convex, such that by the fact that $\mathrm{Leb}(\sfG(t))\in(0,\infty)$ we have that $\sfG(t)$ is a convex body (bounded, convex set with non-empty interior). 
    By combining Theorem \ref{thm:lv-handr-conductance} with Remark~\ref{rem: SPI}, for any $t\in\sfT$ we have
    \[
        \Vert f \Vert_{\nu_t}^2 \leq s \cdot \calE_{\nu_t} \left( H_t, f \right) +\mathbf{1}_{(0,2^{33}\, d^{2} \kappa_{\varpi}(t)^{2})}(s)  \cdot \|f\|_{\osc,\nu_t}^2.
    \]    
Now, the final statement follows by Corollary~\ref{cor:Ht_gap_comp}.
\end{proof}
We apply this result to well-conditioned log-concave targets that attracted intense theoretical interest (see e.g. \cite{chewi2023log}), due in part to its relative tractability.
\begin{example}
    Assume that the potential $V$ of a log-concave $\pi$ is $m$-strongly-convex and $L$-smooth for some 
    $0 < m  < L$, that is, for all $x, y \in \mathbb{R}^d$ we have
\begin{equation}\label{eq:sandwich}
    \frac{m}{2} \, \vert y - x  \vert^2  \leq V (y) - V(x) - \langle \nabla V (x), y - x \rangle   \leq \frac{L}{2} \,\vert y - x \vert^2,
\end{equation}
where $\langle \cdot, \cdot \rangle$ denotes the Euclidean inner-product. By convexity of $V$, it is immediate that we are in the setting of Corollary \ref{cor:qlc-hss-handr}. Without loss of generality, assume that $V$ is uniquely minimised at $x = 0$, and that $V \left( 0 \right) = 0$. Applying \eqref{eq:sandwich} with one argument equal to $0$ establishes for all $z\in\sfG=\mathbb{R}^d$ that
    \begin{align*}
        \frac{m}{2} \, \vert z\vert^{2} 
        \leq V(z) \leq \frac{L}{2} \, \vert z\vert^{2}.
    \end{align*}
    By the former inequalities and 
    $
    \sfG(t) = \{ z\in\mathbb{R}^d \colon V(z) <\exp(-t) \}
    $
    we have 
    \[
    B(0,\sqrt{2\exp(-t)}/\sqrt{L}) \subseteq \sfG(t) \subseteq B(0,\sqrt{2\exp(-t)}/\sqrt{m}),
    \]
    such that $\overline{\kappa}_\varpi \leq \sqrt{m/L}$ and by Corollary~\ref{cor:qlc-hss-handr} we obtain
    \begin{align}
    \label{al: L_smooth_etc}
        \mathcal{E}_\pi \left( U, f \right) & \leq 2^{33} \cdot  \frac{m\,d^2}{L} \cdot \mathcal{E}_\pi \left( H, f \right).
    \end{align}
\end{example}

By results of \cite{Natarovskii2021quantitative}, one expects for spherically-symmetric log-concave $\varpi$ in dimension $d$ that $U$ admits a spectral gap of order $d^{-1}$. Inequality \eqref{al: L_smooth_etc} thus suggests that Hit-and-Run within slice sampling $H$ admits a spectral gap of order $d^{-3}$; this is worse by a factor $d^2 $ than the best known result for the Random Walk Metropolis algorithm as applied to the same problem; see \cite{Andrieu2022rwm} for details.

A key feature of ideal slice sampling is its robustness in heavy-tailed scenarios. While performance may degrade somewhat in this setting (see e.g. \cite{Schar2023revisited}), it has been observed that $U$ still exhibits a spectral gap; this stands in contrast to, for instance, the Random Walk Metropolis algorithm \cite{ Andrieu2023wpi,Jarner2000}. Bearing this in mind, we may expect that for heavy-tailed $\varpi$ with well-conditioned superlevel sets, Hit-and-Run within slice sampling as practically viable option. The following example illustrates this possibility in a specific case.

\begin{example}
    For $d \in \mathbb{N}, m >2$ let $\varpi(x) = ( 1 + \vert x \vert^2 )^{-(d + m)/2}$ for $x\in\mathbb{R}^d$, i.e. $\pi$ corresponds to a $d$-variate Student-t-type distribution with $m$ degrees of freedom.
    Note that the superlevel sets are simply given by Euclidean balls, 
    such that $\kappa_\varpi(t)=1$ for any $t\in\sfT$.
    Moreover, \cite[Theorem~2.4]{Schar2023revisited} proves $\frac{m - 1}{(d + 1)(d + m - 1)}$ as a lower bound of the spectral gap of $U$. Then, in combination with Corollary~\ref{cor:qlc-hss-handr}, for any $f\in\ELL(\pi)$ this yields
    \begin{align*}
       \Vert f\Vert_\pi^2 & \leq 2^{33} \cdot \frac{d^2 (d + 1)(d + m - 1)}{m - 1}\,  \mathcal{E}_\pi ( H, f ).
    \end{align*}
\end{example}

The examples considering so far have conditioning profile that is uniformly bounded from above. For distributions $\pi$ whose tail behaviour is heterogeneous in different directions, the conditioning profile may degenerate in certain limits, see Figure~\ref{fig:poorly-cond}. In such a setting, one expects that Hit-and-Run within slice sampling will degrade. Therefore, it is not surprising that in the following example, based on results of Appendix~\ref{sec:degenerating}, our theory leads to a comparison result that does not imply a lower bound of the spectral gap of the Hit-and-Run within slice sampling from the Markov kernel $U$ of the setting $\mathsf{Slice} \left( \pi, \mathrm{Leb}\right)$. 
\begin{figure}
    \centering
    \includegraphics[width=0.47\linewidth]{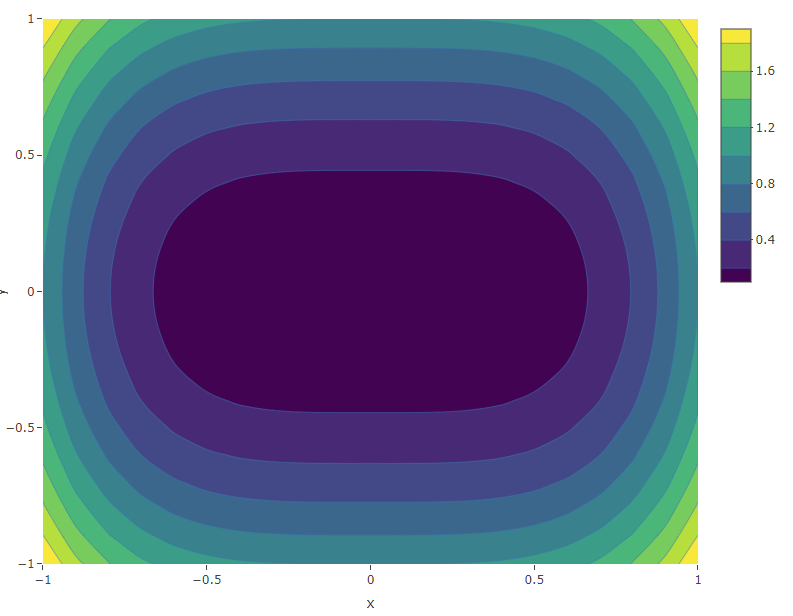}
    \includegraphics[width=0.47\linewidth]{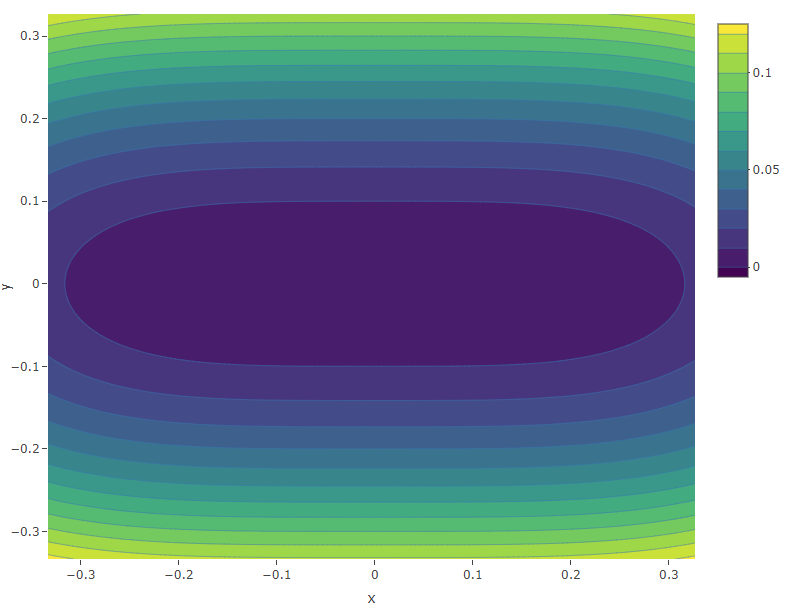}
    \caption{Contour lines of condition profile determined by the potential $V(x,y)=x^4+y^2$ for $x,y\in\mathbb{R}$.}
    \label{fig:poorly-cond}
\end{figure}

\begin{example}
Fix $I\subseteq \{1,\dots,d\}$ 
and let $I^c:=\{1,\dots,d\}\setminus I$. We consider a specific potential $V\colon \mathbb{R}^d \to [0,\infty)$ that determines $\pi$. For $x=(x_1,\dots,x_d)\in\mathbb{R}^d$ we set
\[
V(x) = \sum_{i\in I} x_i^2 + \sum_{i\in I^c} x_i^4.
\]
Note that many of the following arguments rely on the technical developments of Appendix~\ref{sec:degenerating}, in particular Example~\ref{ex: appendix}.
With $\omega_d:={\rm Leb} (B(0,1))$ observe that we have
\begin{align*}
c & = \int_{\mathbb{R}^d} \exp(-V(x)) {\rm d} x 
   \underset{\eqref{eq: V_small}}{\geq} \int_{B(0,1)} \mathrm{e}^{-\vert x \vert^2} {\rm d}x
  \geq \mathrm{e}^{-1} \omega_d,
\end{align*}
where the last inequality follows by the fact that $\inf_{x\in B(0,1)} \mathrm{e}^{-\vert x \vert^2} = \mathrm{e}^{-1}$.
For $d\geq 4$,  by \eqref{eq: main_ex_app}, by the former estimate and \eqref{al: Hit_and_run_SS} from Corollary~\ref{cor:qlc-hss-handr} we obtain for $H$, given by Hit-and-Run within slice sampling, compared to $U$ that for any $s>0$ and all $f\in \ELL_0(\pi)$,
\begin{align*}
        \calE_\pi (U,f) & \leq s\, \calE_\pi (H,f) + \|f\|^2_{\osc,\pi} \, d^{2d}\, 2^{6d}\, s^{-d/8-1/4}.
\end{align*}
Because of the exponential dependence on the dimension, the latter bound appears on the first glance not very useful. However, observe that in case of a lower bound of the spectral gap, i.e.  $\gamma_d \cdot \Vert f \Vert_\pi^2 \leq \calE_\pi (U,f)$, for some $\gamma_d>0$ and all $f\in \ELL_0(\pi)$, we have that $H$ satisfies by Corollary~\ref{cor: spec_gap_U_implies_WPI_H} a WPI w.r.t.
\[
    \bar{\beta}(s) := \frac{d^{2d}\, 2^{6d}}{\gamma_d^{d/8+5/4}} \; s^{-d/8-1/4}.
\]
By Corollary~\ref{cor: MC_implication} and the first bullet point of Remark~\ref{rem: alpha_n} this yields for a Markov chain $(X_n)_{n\in\mathbb{N}_0}$ with Markov kernel $H$ and constant $M:= \Big\Vert \frac{{\rm d} \mathbb{P}^{X_0}}{{\rm d}\pi} - 1 \Big\Vert_{\osc,\pi}^{1/2} $ that
\[
\sup_{A\in \mathcal{B}(\mathbb{R}^d)} \big\vert \mathbb{P}^{X_n}(A) - \pi(A) \big\vert
\leq M\; d^d 2^{3d} \Big(\frac{d+10}{\gamma_d}\Big)^{\frac{d+10}{16}}\cdot n^{-\frac{d+2}{16}}, 
\qquad n\in\mathbb{N}.
\]
For given $\varepsilon\in(0,1)$, to achieve that the left hand-side of the last inequality is smaller than $\varepsilon$, it is sufficient to choose $n\in\mathbb{N}$ so that
\[
n> 2^{48}\,(M \varepsilon)^{16/(d+2)}\, d^{16} \,\Big(\frac{d+10}{8\gamma_d}\Big)^3.
\]
Apart of the large absolute constant, the dependence w.r.t. the dimension may be of interest. Observe, that for fixed $\varepsilon$, $\gamma_d$ behaving polynomially in $d^{-1}$ and $M\leq 2^{(d+2)}$, the number $n$ increases only polynomially in $d$. This shows that in such scenarios, there is no curse of dimensionality in terms of an exponential dependence on $d$ for the number of iterations required to achieve an error less than $\varepsilon$.
\end{example}

We add a discussion of further approaches on the sampling problem considered in this section that seem particularly relevant. In specific scenarios, other on-slice kernels (besides Hit-and-Run) are of substantial interest. For superlevel sets with some coordinate-related structure, Gibbs sampling (also `Coordinate Hit-and-Run') is a natural option, for which results have recently been obtained in the convex setting, see \cite{laddha2023convergence}. For superlevel sets which take the form of polytopes, yet more on-slice kernels are available, including the Dikin walk \cite{kannan2009random}, the Vaidya and John walks \cite{chen2018fast}, and various gradient-based samplers which introduce additional geometric structure to the problem (e.g. \cite{lee2022geodesic}). Some of these methods come with theoretical guarantees in the form of estimates on the conductance or spectral gap, which can readily be used in our framework. 
Another aspect of our analysis is that we assume the geometry of a superlevel set to be given. In other words, the method does not attempt to improve its conditioning profile prior to applying the on-slice kernels. Moving beyond this amounts to incorporate some form of `on-slice preconditioning', which could turn out to be valuable. As a hint in this direction, for a convex body $K$, the Loewner-John Ellipsoid Theorem (see e.g. \cite{john2014extremum}) implies the existence of a linear change of variables which guarantees that $\kappa_K  \leq d$ (cf. the notation of Theorem~\ref{thm:lv-handr-conductance}). 
The extent to which such a strategy can be implemented effectively in practice remains to be seen.

\section{Acknowledgements}

SP's research was supported by EPSRC grant Bayes4Health, `New Approaches to Bayesian Data Science: Tackling Challenges from the Health Sciences' (EP/R018561/1). Part of AQW's research was supported by EPSRC grant `CoSInES (COmputational Statistical INference for Engineering and Security)' (EP/R034710/1). 
DR and BS were supported by the German Research Foundation (DFG grant number 522337282).

\appendix

\section{Exponential distribution Ideal Slice Sampling}
\label{app:exp_IMH}

We consider $\sfG=[0,\infty)$, equipped with the corresponding Borel $\sigma$-algebra $\mathcal{G}$ and exponential distribution $\nu$ with parameter $\lambda>0$, i.e. $\nu(\dif x)=\lambda e^{-\lambda x} \, \dif x$. Moreover, let $\pi$ also be an Exponential distribution with parameter $\alpha>0$ on $(\sfG,\mathcal{G})$. Then, $\pi(\dif x) = \alpha e^{-\alpha x}\, \dif x = \frac{\alpha}{\lambda} e^{-(\alpha-\lambda)x} \nu(\dif x)$, such that $\varpi(x)=e^{-(\alpha-\lambda)x}$. We consider the 
$\mathsf{Slice} \left( \pi, \nu \right)$-setting with corresponding Markov kernel
\[
U(x,A) 
= \frac{1}{\varpi(x)} \int_0^{\varpi(x)} \frac{\nu(A\cap\sfG(t))}{m(t))} \dif t.
\]
Set $r:=\alpha-\lambda$ and note that for $t\in\sfT$ we have
\[
\sfG(t) = \begin{cases}
            [0,-\frac{\log(t)}{r}), & r>0,\\
            [0,\infty), & r=0,\\
            [-\frac{\log(\max\{1,t\})}{r},\infty), & r<0,
          \end{cases}   
\quad \text{and} \quad
m(t)=\begin{cases}
            1-t^{\lambda/r}, & r>0,\\
            1, & r=0,\\
            (\max\{1,t\})^{\lambda/r}, & r<0.
     \end{cases}  
\]
Observe that the case $\alpha=\lambda$ is trivial, since then $\pi$ is sampled directly.

Given a bijective map $T\colon \sfG\to \widetilde{\sfG}$ and corresponding measurable space $(\widetilde{\sfG},\widetilde{\mathcal{G}})=(T(G),T(\mathcal{G}))$, define the pushforward measure of $\pi$ (w.r.t. $T$) on $(\widetilde{\sfG},\widetilde{\mathcal{G}})$ by $\widetilde{\pi}(B) = \pi(T^{-1}(B))$, with $B\in\widetilde{\mathcal{G}}$, and the pushforward Markov kernel $\widetilde{U} \colon \widetilde{\sfG}\times \widetilde{\mathcal{G}}\to [0,1]$ by
\[
\widetilde{U}(z,B) := U(T^{-1}z,T^{-1}(B)), \qquad (z,B)\in \widetilde{\sfG}\times\widetilde{\mathcal{G}}.
\]
By the fact that $U$ is $\pi$-reversible and positive also $\widetilde{U}$ is $\widetilde{\pi}$-reversible and positive, see \cite[Proposition~20 and 21]{RudolfSprungk2022}.
Moreover, by \cite[Theorem~27]{RudolfSprungk2022} it follows
\begin{equation}
\label{eq: spec_gap_transf}
 \inf_{f\in \ELL_0(\pi),\, \Vert f \Vert_\pi=1} 
    \mathcal{E}_\pi(U,f) = \inf_{g\in \ELL_0(\widetilde{\pi}),\, \Vert g \Vert_{\widetilde{\pi}}=1} 
    \mathcal{E}_{\widetilde{\pi}}(\widetilde{U},g),
\end{equation}
which means that the spectral gaps of $U$ and $\widetilde{U}$ coincide, cf. Remark~\ref{rem: SPI}. 

Therefore, by using a suitable $T$, coupling arguments and \cite[Proposition~30]{ollivier2009ricci} we show a lower bound of the right hand-side of \eqref{eq: spec_gap_transf} and conclude a lower bound of the spectral gap of $U$ for exponential distribution Ideal Slice Sampling. 

\begin{proposition}
\label{prop: Exp_USS_spec_gap}
    In the setting of this section, we have
    \[
        \inf_{f\in \ELL_0(\pi),\, \Vert f \Vert_\pi=1} 
    \mathcal{E}_\pi(U,f) \geq \gamma(\alpha,\lambda):= 
    \begin{cases}
        \frac{\alpha+\lambda}{2\alpha}    & \alpha\geq \lambda,\\
        \big(\frac{\alpha}{2\lambda-\alpha}\big)^2 & \alpha<\lambda.
    \end{cases}
    \]
\end{proposition}
\begin{proof}
We distinguish two cases:

\textbf{1st case:} 
We assume that $r=\alpha-\lambda>0$, which may be interpreted as heavy-tailed reference measure setting. Consider the bijection $T\colon [0,\infty) \to (0,1]$ given by $T(x)=e^{-\lambda x}$ with $T^{-1}(x)=-\log(x)/\lambda$. For $(z,B)\in (0,1]\times \mathcal{B}((0,1])$ with $\mathrm{Leb}$ being the 1-dimensional Lebesgue measure, we have
\begin{align*}
\widetilde{U}(z,B) & = \frac{\lambda}{z^{r/\lambda}} \int_0^{z^{r/\lambda}} \frac{1}{(1-t^{\lambda/r})} \int_{T^{-1}(B)\cap [0,-\log(t)/r)} e^{-\lambda y} \dif y \dif t \\
& = \frac{\lambda}{z^{r/\lambda}} \int_0^{z^{r/\lambda}} \frac{1}{(1-t^{\lambda/r})} \int_{T^{-1}(B\cap (t^{\lambda/r},1])} e^{-\lambda y} \dif y \dif t \\
&= \frac{1}{z^{r/\lambda}} \int_0^{z^{r/\lambda}} 
\frac{\mathrm{Leb}(B\cap (t^{\lambda/r},1])}{(1-t^{\lambda/r})} \dif t,
\end{align*}
where in the last equality we applied integration by substitution w.r.t. $T^{-1}$. The Markov kernel $\widetilde{U}$ corresponds to Ideal Slice Sampling targeting $\widetilde{\pi}(\dif z) \propto z^{r/\lambda} \dif z.$
For arbitrary $z_1,z_2\in(0,1]$ observe that the probability measure $\xi_{z_1,z_2}$ on $((0,1]^2, \mathcal{B}((0,1]^2))$, determined by
\begin{align*}
& \xi_{z_1,z_2}(B \times C)
 := \int_0^1 \int_0^1 \mathbf{1}_B((1-(uz_1^{r/\lambda})^{\lambda/r})v+(uz_1^{r/\lambda})^{\lambda/r}) \\ 
& \qquad \qquad \quad\qquad \qquad \quad\qquad \qquad \quad \cdot \mathbf{1}_C((1-(uz_2^{r/\lambda})^{\lambda/r})v+(uz_2^{r/\lambda})^{\lambda/r}) \dif u \dif v\\
& \qquad  = \int_0^1 \int_0^1 \mathbf{1}_B((1-u^{\lambda/r}z_1)v+u^{\lambda/r}z_1) \mathbf{1}_C((1-u^{\lambda/r}z_2)v+u^{\lambda/r}z_2) \dif u \dif v
,
\end{align*}
is a Markovian coupling of $\widetilde{U}(z_1,\cdot)$ and $\widetilde{U}(z_2,\cdot)$. Moreover, we obtain
\begin{align*}
    \int_{(0,1]^2} \vert x-y \vert \, \xi_{z_1,z_2}(\dif x \dif y)
    = \vert z_1 - z_2 \vert \int_0^1 \int_0^1 (1-v)u^{\lambda/r} \dif v \dif u
    =\frac{r}{2\alpha}.
\end{align*}

\textbf{2nd case:} Assume that $r=\alpha-\lambda<0$, which may be interpreted as light-tailed reference measure scenario.
Consider the bijection $T\colon [0,\infty) \to [1,\infty)$ given by $T(x)=e^{\alpha x/2}$ with $T^{-1}(x)=-2\log(x)/\alpha$. For $(z,B)\in [1,\infty)\times \mathcal{B}([1,\infty))$, we have
\begin{align*}
& \widetilde{U}(z,B)  = \frac{1}{z^{-2r/\alpha}} \int_0^{z^{-2r/\alpha}}  \frac{\lambda}{(\max\{1,t\})^{\lambda/r}} \int_{T^{-1}(B)\cap [-\frac{\log(\max\{1,t\})}{r},\infty)} e^{-\lambda y} \dif y \dif t \\
& = \frac{1}{z^{-2r/\alpha}} \int_0^{z^{-2r/\alpha}}  \frac{\lambda}{(\max\{1,t\})^{\lambda/r}} \int_{T^{-1}(B\cap [\max\{1,t\}^{-\alpha/(2r)},\infty))} e^{-\lambda y} \dif y \dif t \\
&= \frac{1}{ z^{-2r/\alpha}} \int_0^{z^{-2r/\alpha}}  \frac{2\lambda}{(\max\{1,t\})^{\lambda/r}\alpha} \int_{B\cap [\max\{1,t\}^{-\alpha/(2r)},\infty)}  y^{-2\lambda/\alpha-1} \dif y \dif t,
\end{align*}
where in the last equality we applied integration by substitution w.r.t. $T^{-1}$. Given $t>0$, define the inner integral induced distribution 
\[
\widetilde{\nu}_t(B) := \frac{2\lambda}{(\max\{1,t\})^{\lambda/r}\alpha} \int_{B\cap [\max\{1,t\}^{-\alpha/(2r)},\infty)}  y^{-2\lambda/\alpha-1} \dif y, \quad B\in\mathcal{B}([1,\infty)).
\]
With a random variable $U\sim \text{Unif}([0,1])$
we have for all $s \geq (\max\{1,t\})^{-\alpha/(2r)}$ that
\[
\widetilde{\nu}_t((-\infty,s]) = 1-(\max\{1,t\})^{\lambda/r} s^{-2\lambda/\alpha} = \mathbb{P}((\max\{1,t\})^{-\alpha/(2r)} U^{-\alpha/(2\lambda)} \leq s)
\]
and for all $s<(\max\{1,t\})^{-\alpha/(2r)}$ that 
\[
\widetilde{\nu}_t((-\infty,s])=0=\mathbb{P}((\max\{1,t\})^{-\alpha/(2r)} U^{-\alpha/(2\lambda)} \leq s),\]
i.e. the corresponding CDFs coincide and $(\max\{1,t\})^{-\alpha/(2r)} U^{-\alpha/(2\lambda)}\sim\widetilde{\nu}_t$. 

For any $z_1,z_2\in[1,\infty)$ the probability measure $\xi_{z_1,z_2}$ on $([1,\infty)^2, \mathcal{B}([1,\infty)^2))$, induced by satisfying 
\begin{align*}
 \xi_{z_1,z_2}(B \times C)
 & := \int_0^1 \int_0^1 \mathbf{1}_B((\max\{1,v z_1^{-2r/\alpha}\})^{-\alpha/(2r)}u^{-\alpha/(2\lambda)})  \\
 & \qquad \qquad \qquad \cdot \mathbf{1}_C((\max\{1,v z_2^{-2r/\alpha}\})^{-\alpha/(2r)}u^{-\alpha/(2\lambda)}) \dif u \dif v
,
\end{align*}
is a Markovian coupling of $\widetilde{U}(z_1,\cdot)$ and $\widetilde{U}(z_2,\cdot)$. 
W.l.o.g. let $z_1\leq z_2$ and note that
\begin{align*}
    &  \int_{[1,\infty)^2} \vert x-y \vert \, \xi_{z_1,z_2}(\dif x \dif y)\\
    =& \int_0^1 \int_0^1 u^{-\alpha/(2\lambda)} \vert (\max\{1,v z_1^{-2r/\alpha}\})^{-\alpha/(2r)}- (\max\{1,v z_2^{-2r/\alpha}\})^{-\alpha/(2r)}\vert \dif v \dif u\\
    =& \frac{2\lambda}{2\lambda-\alpha}\int_0^1  \vert (\max\{1,v z_1^{-2r/\alpha}\})^{-\alpha/(2r)}- (\max\{1,v z_2^{-2r/\alpha}\})^{-\alpha/(2r)}\vert \dif v
\end{align*}
Recall that $r<0$ and split the remaining integral into different regions to treat the max-function, such that
\begin{align*}
   & \int_0^1  \vert (\max\{1,v z_1^{-2r/\alpha}\})^{-\alpha/(2r)}- (\max\{1,v z_2^{-2r/\alpha}\})^{-\alpha/(2r)}\vert \dif v\\
   = & \int_{z_2^{2r/\alpha}}^{z_1^{2r/\alpha}} (v^{-\alpha/(2r)}z_2-1) \dif v + (z_2-z_1) \int_{z_1^{2r/\alpha}}^1 v^{-\alpha/(2r)} \dif v\\
   = & \frac{2r}{2r-\alpha}(z_2-z_1) - \Big( 1- \frac{2r}{2r-\alpha}\Big) (z_1^{2r/\alpha}-z_2^{2r/\alpha})
   \leq \frac{2r}{2r-\alpha}(z_2-z_1).
\end{align*}
Therefore, we have
\[
\int_{[1,\infty)^2} \vert x-y \vert \, \xi_{z_1,z_2}(\dif x \dif y) \leq \frac{2\lambda}{2\lambda-\alpha} \cdot \frac{2r}{2r-\alpha}\vert z_2-z_1 \vert = \big( 1-\big(\frac{\alpha}{2\lambda-\alpha}\big)^2\big) \vert z_2-z_1 \vert.
\]

Taking the final estimates of both cases together, in the light of \cite{ollivier2009ricci}, we obtain $1-\gamma(\alpha,\lambda)$ as an upper bound for the Ricci curvature. Applying \cite[Proposition~30]{ollivier2009ricci} and Remark~\ref{rem: SPI} yields $\gamma(\alpha,\lambda)$ as lower bound for the right hand-side of \eqref{eq: spec_gap_transf}, which finishes the proof. 
\end{proof}

\section{Degenerating condition profile}\label{sec:degenerating}

Throughout this section, we impose as standing assumption that the potential $V\colon\mathbb{R}^d \to \mathbb{R}$ satisfies $V(x) \geq V(0) \geq 0$ for all $x\in\mathbb{R}^d$ and $\lim_{\vert x \vert \to \infty} V(x) = \infty$.
We start with two auxiliary tools.

\begin{lemma}\label{lem:bulk-conditioning}

For potential $V\colon \mathbb{R}^d \to [0,\infty)$ we assume that there are constants
$c_{1},c_{2},p_{1},p_{2},r_{1},r_{2}\in (0,\infty)$ and $v^{*} \in (0,\infty)$
with 
\begin{align*}
    p_{1}\geq p_{2},\quad r_{1}\geq r_{2},\quad (r_{1}/c_{2})^{\,p_{2}}\geq v_{*},
\end{align*}
such that for all $x\in\mathbb{R}^d$ we have
\begin{align}
\notag
\left\vert x\right\vert  \leq r_{1} & \; \implies\; (\vert x\vert/c_{1})^{p_{1}} < V(x) <(\vert x\vert/c_{2})^{p_{2}},\\
\label{al: r_2}
\vert x\vert >r_{2} & \; \implies\; v_{*} < V(x).
\end{align}
Then, for all $v \in ( 0, v_{*}]$ we have
\begin{align*}
    B(0,c_2 v^{1/p_{2}}) 
    \subseteq\{x\in\mathbb{R}^d\colon V\left(x\right) < v\} \subseteq
    B(0,c_1 v^{1/p_{1}}).
\end{align*}
\end{lemma}
\begin{proof}
Firstly, for $v\in[0,v_{*}]$ assume that $x\in B(0,c_2v^{1/p_{2}}) $, i.e. $\vert x \vert \leq c_2 v^{1/p_2}$. By $v_{*} \leq (r_1/c_{2})^{p_2}$ we have $\vert x\vert \leq r_1$ such that $V(x)<(\vert x \vert/c_{2})^{p_2} \leq v$ and the first inclusion is verified. 
Secondly, for $v\in[0,v_{*}]$ assume that $V(x)<v$, such that by the contrapositive of \eqref{al: r_2} we have $\vert x \vert \leq r_2 \leq r_1$. Hence $(\vert x\vert/c_{1})^{p_{1}} \leq V(x) < v$, which by rearrangement implies that $\vert x\vert  \leq c_1 v^{1/p_{1}}$, yielding the second inclusion.
\end{proof}

\begin{lemma}\label{lem:tail-conditioning}
For potential $V\colon \mathbb{R}^d \to [0,\infty)$ we assume that there are
constants $C_{1},C_{2},q_1,q_2,R_{1},R_{2}\in(0,\infty)$ and $v^*\in(0,\infty)$
with
\begin{align*}
    q_1\leq q_2\quad R_{1} \leq R_{2},\quad  
        (R_{1}/C_{2})^{\,q_2} \leq v^*,
\end{align*}
such that for all $x\in\mathbb{R}^d$ we have
\begin{align}
\label{al: R_1}
\vert x\vert \geq R_{1} & \; \implies\; (\vert x\vert/C_{1})^{q_1} < V(x) < (\vert x\vert/C_{2} )^{q_2},\\
\label{al: R_2}
\notag
\vert x\vert <R_{2} & \; \implies\; V\left(x\right)<v^*.
\end{align}
Then, for $v\in[v^*,\infty)$ we have
\begin{align*}
    B(0,C_2v^{1/q_2}) \subseteq\{x\in\mathbb{R}^d\colon V(x) < v\} 
    \subseteq B(0,C_1v^{1/q_1}).
\end{align*}
\end{lemma}
\begin{proof}
Firstly, for $v\in [v^*,\infty)$ assume that $x\in B(0,C_2v^{1/q_2})$, i.e. $\vert x\vert \leq C_2v^{1/q_2}$. On the one hand, if $\vert x \vert <R_1 \leq R_2$, then $V(x)<v^* \leq v$. 
On the other hand, if $R_1\leq \vert x \vert$, by \eqref{al: R_1} and $\vert x \vert \leq C_2v^{1/q_2}$ follows $V(x)<(\vert x \vert/C_{2})^{q_2}\leq v$. This proves the first inclusion. Secondly, for $v\in [v^*,\infty)$ assume that $V(x)<v$. Then, either $\vert x \vert <R_1$ or $\vert x \vert \geq R_1$. In the latter case, by \eqref{al: R_1}, we have $(\vert x\vert/C_{1}) ^{q_1} < V(x) < v$, such that $\vert x\vert <C_1v^{1/q_1}$. 
Therefore, in general (without the restriction $\vert x \vert \geq R_1$) we obtain
$\vert x\vert  \leq \max\{ R_{1},C_1v^{1/q_1}\} =C_1v^{1/q_1}$. Here the last equality follows by the fact that
$
v\geq v^* 
\geq (R_{1}/C_{2})^{\,q_2}
> ( R_{1}/C_{1})^{\,q_1},
$ where the final inequality is implied by \eqref{al: R_1}.
This concludes the proof of the second inclusion.
\end{proof}

Starting from the results of the latter lemmas we derive estimates that are applicable in examples of Hit-and-Run within slice sampling. Before turning to that
we add an auxiliary estimate that follows by a suitable case distinction.
\begin{lemma}
\label{lem: aux_est_of_fcts}
    Let $c^*,c_*,c \in [0,\infty)$, $\alpha,\beta\in (0,\infty)$ and $v^*,v_* \in (0,\infty)$ with $v_*\leq v^*$. Then, for any $t\in (0,1)$ with
    \[
        f(t) := \begin{cases}
                    c^* (\log t^{-1})^{\beta} & t\in (0,\mathrm{e}^{-v^*}], \\
                    c & t\in(\mathrm{e}^{-v^*},\mathrm{e}^{-v_*}],\\
                    c_* (\log t^{-1})^{-\alpha} & t\in (\mathrm{e}^{-v_*},1),
                \end{cases}
    \]
    we have 
    \[
        f(t) \leq \max\{c^*,c,c_*\}
        \begin{cases}
                    (\log t^{-1})^{\beta} & t\in (0,\mathrm{e}^{-1}], \\
                    (\log t^{-1})^{-\alpha} & t\in (\mathrm{e}^{-1},1).
                \end{cases}
    \]
\end{lemma}

Now we formulate and prove the aforementioned tools for applying our comparison machinery in the context of Hit-and-Run within slice sampling.

\begin{proposition} \label{prop: cond_profile_not_unif_bnd}
    For potential $V\colon \mathbb{R}^d \to [0,\infty)$ assume that there are constants
    $v_*,v^*\in (0,\infty)$ satisfying $v_*\leq v^*$ and
    $c_1,c_2,p_1,p_2,C_1,C_2,q_1,q_2\in(0,\infty)$ with $p_2\leq p_1$, $q_1\leq q_2$ such that for all $v\in (0,v_*)$ holds 
    \begin{align} \label{balls_small_v}
    B(0,c_2 v^{1/p_{2}}) 
    \subseteq\{x\in\mathbb{R}^d\colon V\left(x\right) < v\} \subseteq
    B(0,c_1 v^{1/p_{1}})    
    \end{align}
    and for all $v\in [v^*,\infty)$ holds
    \begin{align} \label{balls_large_v}
    B(0,C_2v^{1/q_2}) \subseteq\{x\in\mathbb{R}^d\colon V(x) < v\} 
    \subseteq B(0,C_1v^{1/q_1}).
    \end{align}
    Define
    \begin{align*}
        b_1 & := 2^{33} d^2\max\left\{\frac{c_1}{c_2},\frac{C_1}{C_2},\frac{C_1}{c_2}\cdot \frac{(v^*)^{1/q_1}}{(v_*)^{1/p_2}}\right\},
        \quad
        b_2:=  \omega_d\, 
    \max\left\{  \frac{(v^*)^{d/q_1}C_1^d}{(v_*)^{d/p_1}}, c_1^d \right\}
    \end{align*}
    with $\omega_d:={\rm Leb}(B(0,1))$ and set $r^* := \min\{q_1,p_1\}$.
    Then, with $\varpi\colon \mathbb{R}^d\to(0,1]$ given by $\varpi(x)=\exp(-V(x))$ (cf. Definition~\ref{def: potential}), for any $t\in(0,1)$ we have
    \begin{equation}
    \label{eq: kappa_bnd}
        \kappa_{\varpi}(t) \leq  
        \frac{b_1}{2^{33}d^2}\cdot
        \begin{cases}
            (\log t^{-1})^{1/q_1-1/q_2} & t\in (0,\mathrm{e}^{-1}], \\
            (\log t^{-1})^{1/p_2-1/p_1} & t\in (\mathrm{e}^{-1},1),
        \end{cases}                             
    \end{equation}
    where $\kappa_\varpi$ has been defined in Corollary~\ref{cor:qlc-hss-handr}, and
    \begin{align}
     & m(t)  = {\rm Leb}(\sfG(t))
           \leq b_2 \cdot 
        \begin{cases}
          (\log t^{-1})^{d/
        r^*} &      t\in(0,\mathrm{e}^{-1}],\\
          (\log t^{-1})^{d/
        r^*} &      t\in(\mathrm{e}^{-1},1).
        \end{cases}  
        \label{al: m(t)_est}
    \end{align}
    In particular, 
    for any $s>0$, this yields 
    \begin{align}
    \notag
           \int_0^1 \mathbf{1}_{(0,2^{33}\, d^{2} \kappa_{\varpi}(t)^{2})}(s) \, m \left( t \right) \dif t
          & \leq b_2 
          \Big(\min\big\{1,(b_1/s)^{\frac{p_1-p_2}{p_1p_2}}\big\}\Big)^{d/r^*+1}\\
          & \qquad +
          2b_2\,\Big(\frac{2d}{r^* \mathrm{e}}\Big)^{d/r^*} \mathrm{e}^{-\frac{1}{2}\max\big\{1,(s/b_1)^{\frac{q_2-q_1}{q_1q_2}}\big\}}.
          \label{al: beta_int_est}
    \end{align}
\end{proposition}
\begin{proof}
    By \eqref{balls_small_v}, \eqref{balls_large_v}, the fact that 
    \[
        \sfG(t) = \{ x\in \mathbb{R}^d\colon V(x)<\log t^{-1} \},
    \]
    and the definition of $\kappa_\varpi$ we obtain
    \[
    \kappa_\varpi(t) \leq 
    \begin{cases}
        \frac{C_1}{C_2} (\log t^{-1})^{1/q_1-1/q_2} & t\in(0,\mathrm{e}^{-v^*}], 
          \\  
        \frac{C_1 (v^*)^{1/q_1}}{c_2(v_*)^{1/p_2}} & t\in(\mathrm{e}^{-v^*},\mathrm{e}^{-v_*}],    \\
        \frac{c_1}{c_2} (\log t^{-1})^{1/p_1-1/p_2} & t\in(\mathrm{e}^{-v_*},1).
    \end{cases}
    \]
    By Lemma~\ref{lem: aux_est_of_fcts} the right-hand side of the former inequality can be further estimated, which implies \eqref{eq: kappa_bnd}.

    We turn to the verification of \eqref{al: m(t)_est}. Again, by leveraging \eqref{balls_small_v} and \eqref{balls_large_v}, we obtain
    \begin{align*}
        m(t) & \leq  \omega_d
    \begin{cases}
        C_1^d (\log t^{-1})^{d/q_1} & t\in(0,\mathrm{e}^{-v^*}], 
          \\  
        C_1^d (v^*)^{d/q_1} & t\in(\mathrm{e}^{-v^*},\mathrm{e}^{-v_*}],    \\
        c_1^d (\log t^{-1})^{d/p_1} & t\in(\mathrm{e}^{-v_*},1),
    \end{cases}\\
    & \leq  \omega_d \max\left\{  \frac{(v^*)^{d/q_1}C_1^d}{(v_*)^{d/p_1}}, c_1^d \right\}
    \begin{cases}
           (\log t^{-1})^{d/q_1} & t\in(0,\mathrm{e}^{-v^*}], 
          \\  
        (\log \mathrm{e}^{v_*})^{d/p_1} & t\in(\mathrm{e}^{-v^*},\mathrm{e}^{-v_*}], \\
        (\log t^{-1})^{d/p_1} & t\in(\mathrm{e}^{-v_*},1),
    \end{cases}
    \\
    & \leq \omega_d \max\left\{ \frac{(v^*)^{d/q_1} C_1^d }{(v_*)^{d/p_1}}, c_1^d \right\} \max\{ (\log t^{-1})^{d/q_1},(\log t^{-1})^{d/p_1} \}.
    \end{align*}
    By representing the last maximum according to the case distinction whether $\log t^{-1}$ is larger or smaller than 1, finishes this part.

    Finally we prove \eqref{al: beta_int_est}. By splitting the integration domain we get
    \begin{align*}
         \int_0^1 \mathbf{1}_{(0,2^{33}\, d^{2} \kappa_{\varpi}(t)^{2})}(s) \, m \left( t \right) \dif t 
         \leq & \int_0^{\mathrm{e^{-1}}} \mathbf{1}_{(0,2^{33}\, d^{2} \kappa_{\varpi}(t)^{2})}(s) \, m \left( t \right) \dif t \\
          & + \int_{\mathrm{e^{-1}}}^1 \mathbf{1}_{(0,2^{33}\, d^{2} \kappa_{\varpi}(t)^{2})}(s) \, m \left( t \right) \dif t.
    \end{align*}
    We consider both integrals separately. By \eqref{eq: kappa_bnd} and \eqref{al: m(t)_est} we obtain for the first integral that
    \begin{align*}
        &  \int_0^{\mathrm{e^{-1}}} \mathbf{1}_{(0,2^{33}\, d^{2} \kappa_{\varpi}(t)^{2})}(s) \, m \left( t \right) \dif t \\
        \leq & b_2 \int_0^{\mathrm{e^{-1}}} \mathbf{1}_{(0,b_1 (\log t^{-1})^{1/q_1-1/q_2})}(s) (\log t^{-1})^{d/r^*} \, \dif t \\
        = & b_2 \int_1^{\infty} \mathbf{1}_{\big((s/b_1)^{\frac{q_2-q_1}{q_1q_2}},\infty\big)}(u)\, u^{d/r^*} \mathrm{e}^{-u} \dif u,
    \end{align*}
    where the last equality follows by substituting $t=\mathrm{e}^{-u}$. By the elementary inequality $u^{d/r^*} \mathrm{e}^{-u} \leq \big(\frac{2d}{r^{*}\mathrm{e}}\big)^{d/r^*} \mathrm{e}^{-u/2}$ for $u\in[1,\infty)$ we obtain
       \begin{align*}
          \int_0^{\mathrm{e^{-1}}} \mathbf{1}_{(0,2^{33}\, d^{2} \kappa_{\varpi}(t)^{2})}(s) \, m \left( t \right) \dif t
       &  \leq b_2 \Big(\frac{2d}{r^{*}\mathrm{e}}\Big)^{d/r^*}  \int_{\max\big\{1,(s/b_1)^{\frac{q_2-q_1}{q_1 q_2}}\big\}}^\infty \mathrm{e}^{-u/2}\, \dif u \\
       & = 2b_2 \Big(\frac{2d}{r^{*}\mathrm{e}}\Big)^{d/r^*} \mathrm{e}^{-\frac{1}{2}\max\big\{1,(s/b_1)^{\frac{q_2-q_1}{q_1q_2}}\big\}}.
        \end{align*}
        By \eqref{eq: kappa_bnd}, \eqref{al: m(t)_est} and also the substitution $t=\mathrm{e}^{-u}$ we obtain for the second integral that
    \begin{align*}
        &  \int_{\mathrm{e^{-1}}}^1 \mathbf{1}_{(0,2^{33}\, d^{2} \kappa_{\varpi}(t)^{2})}(s) \, m \left( t \right) \dif t 
        \leq  b_2 \int_0^{1} \mathbf{1}_{\big(0,(b_1/s)^{\frac{p_1-p_2}{p_1 p_2}} \big)} (u)  u^{d/r^*} \mathrm{e}^{-u}\, \dif u\\
        \leq & b_2 \int_0^{\min\big\{ 1,(b_1/s)^{\frac{p_1-p_2}{p_1 p_2}} \big\}} u^{d/r^*} \,\dif u \leq b_2 \Big (\min\big\{ 1,(b_1/s)^{\frac{p_1-p_2}{p_1 p_2}} \big\} \Big)^{d/r^*+1}.
    \end{align*}
    Taking the estimates of both integrals together finishes the proof of \eqref{al: beta_int_est}.
\end{proof}

In particular, if $p_1=p_2$ (i.e. the smallest level sets are well-conditioned, so that $\kappa_\varpi$ is uniformly bounded), then the right hand-side of \eqref{al: beta_int_est} decays at a rate which is at least exponential in $s$, since the first term of the upper bound vanishes. By contrast, if $ p_2<p_1$, then the first term dominates, and as a consequence, the right hand-side of \eqref{al: beta_int_est} decays only polynomial in $s$. 

\begin{example} \label{ex: appendix}
Fix $I\subseteq \{1,\dots,d\}$ 
and let $I^c:=\{1,\dots,d\}\setminus I$. We consider $V(x) = \sum_{i\in I} x_i^2 + \sum_{i\in I^c} x_i^4$ for $x=(x_1,\dots,x_d)\in\mathbb{R}^d$. For all $i\in \{1,\dots,d\}$ and $x\in\mathbb{R}^d\setminus\{0\}$ we have 
\begin{equation} 
\label{eq: smaller_1}
    \vert x_i \vert  \leq 
    \vert x \vert,\quad \text{such that} \quad \frac{\vert x_i \vert^2}{\vert x\vert^2} \geq \frac{\vert x_i \vert^4}{\vert x\vert^4},
\end{equation}
which implies
\begin{equation}
    \label{eq: upper}
    V(x) \underset{\eqref{eq: smaller_1}} {\leq} \vert x \vert^2 \sum_{i\in I} \frac{x_i^2}{\vert x \vert^2} + \vert x \vert^4 \sum_{i\in I^c} \frac{x_i^2}{\vert x \vert^2}
\end{equation}
and
\begin{equation}
    \label{eq: lower}
    V(x) \underset{\eqref{eq: smaller_1}}{\geq} \vert x \vert^2 \sum_{i\in I} \frac{x_i^4}{\vert x \vert^4} + \vert x \vert^4 \sum_{i\in I^c} \frac{x_i^4}{\vert x \vert^4}.
\end{equation}
Therefore, for $x\in B(0,1)\setminus\{0\}$, which yields $\vert x \vert^2 \geq \vert x \vert^4$, we obtain
$
    V(x) \underset{\eqref{eq: upper}}{\leq} \vert x \vert^2    
$
and 
\[
 V(x) \underset{\eqref{eq: lower}}{\geq} \vert x \vert^4 \sum_{i=1}^d \frac{x_i^4}{\vert x\vert^4} \geq \frac{\vert x \vert ^4}{d},
\]
where the last inequality follows by
          $\sum_{i=1}^d y_i^4 \geq \frac{\vert y\vert^2}{d}$,
with $y=x/\vert x \vert \in \mathbb{R}^d$.
Consequently, for all $x\in B(0,1)$, we have
\begin{equation}  \label{eq: V_small}
    \vert x \vert^4/d \leq V(x) \leq \vert x \vert^2.
\end{equation}
On the other hand, for $x\not\in B(0,1)$, which implies $\vert x\vert^2 \leq \vert x\vert^4$, we have $V(x) \underset{\eqref{eq: upper}}{\leq} \vert x \vert^4$ and
\[
V(x)\underset{\eqref{eq: lower}}{\geq} \vert x \vert^2 \sum_{i=1}^d \frac{x_i^4}{\vert x\vert^4} \geq \frac{\vert x \vert ^2}{d},
\]
where (again) the last inequality follows by
          $\sum_{i=1}^d y_i^4 \geq \frac{\vert y\vert^2}{d}$,
with $y=x/\vert x \vert \in \mathbb{R}^d$.
Consequently, for all $x\not\in B(0,1)$, we obtain
\begin{equation} \label{eq: V_large}
    \vert x \vert^2/d \leq V(x) \leq \vert x \vert^4.
\end{equation}
Now we are able to apply Lemma~\ref{lem:bulk-conditioning} with $p_1=4$, $p_2=2$, $c_1=d^{1/4}$, $c_2=1$, $r_1=r_2=1$ (by using \eqref{eq: V_small}) and $v_*=1/d$ (by using \eqref{eq: V_large}) such that for all $v\in (0,1/d]$ holds
\[
    B(0,v^{1/2}) \subseteq \{x\in\mathbb{R}^d\colon V(x) < v\} \subseteq B(0,(dv)^{1/4}).    
\]
Moreover, we apply Lemma~\ref{lem:tail-conditioning} with $q_1=2$, $q_2=4$, $C_1=d^{1/2}$, $C_2=1$, $R_1=R_2=1$ (by using \eqref{eq: V_small}) and $v^*=1$ (by using \eqref{eq: V_large})
such that for all $v\in [1,\infty)$ holds
\[
    B(0,v^{1/4}) \subseteq \{x\in\mathbb{R}^d\colon V(x) < v\} \subseteq B(0,(dv)^{1/2}).    
\]
Consequently, the assumptions of Proposition~\ref{prop: cond_profile_not_unif_bnd} are satisfied with the previously set parameters, such that
\[
r^*=2,\qquad  b_1 = 2^{33} d^3,\qquad b_2 = \omega_d d^{3/4d},
\]
where $\omega_d={\rm Leb}(B(0,1))$, and for any $s>0$, we have
\begin{align*}
\int_0^1 \mathbf{1}_{(0,2^{33}\, d^{2} \kappa_{\varpi}(t)^{2})}(s) \, m \left( t \right) \dif t
          & \leq \omega_d d^{3/4d}
          \Big(\min\big\{1,(2^{33} d^3/s)^{1/4}\big\}\Big)^{d/2+1}\\
          & \qquad 
          +
          2\omega_d d^{3/4d}\,\Big(\frac{d}{\mathrm{e}}\Big)^{d/2} \mathrm{e}^{-\frac{1}{2}\max\big\{1,(s/(2^{33}d^3))^{1/4}\big\}}.
\end{align*}
By the fact that $\mathrm{e}^{-x/2} \leq (d/2+1)  x^{-d/2-1}$ for all $x\geq 1$ we further bound 
\begin{align*}
& \int_0^1 \mathbf{1}_{(0,2^{33}\, d^{2} \kappa_{\varpi}(t)^{2})}(s) \, m \left( t \right) \dif t \\
& \leq  \omega_d(1+2(d/2+1) (d/\mathrm{e})^{d/2} )        
      d^{3/4d}\Big(\min\big\{1,(2^{33} d^3/s)^{1/4}\big\}\Big)^{d/2+1}\\
& \leq \omega_d(1+(d+2)(d/\mathrm{e})^{d/2})     d^{3/4d} (2^{33} d^3/s)^{d/8+1/4}.
\end{align*}
Finally, a few more (crude) estimates for $d\geq4$ yield
\begin{equation}
\label{eq: main_ex_app}
    \int_0^1 \mathbf{1}_{(0,2^{33}\, d^{2} \kappa_{\varpi}(t)^{2})}(s) \, m \left( t \right) \dif t
    \leq \omega_d\, d^{2d}\, 2^{6d}\,  s^{-d/8-1/4}.
\end{equation}

\end{example}

\bibliography{slice_comp}
\bibliographystyle{plainnat}

\end{document}